\DeclareMathAlphabet\mathbfcal{OMS}{cmsy}{b}{n}
\newcommand{\idle}{\mathrm{i}}
\newcommand{\retx}{\mathrm{x}}
\newcommand{\new}{\mathrm{n}}
\def\blfootnote{\xdef\@thefnmark{}\@footnotetext}
\DeclareMathOperator*{\argmin}{arg\,min} 
\DeclareMathOperator*{\argmax}{arg\,max} 
\newtheorem{theorem}{Theorem}
\newtheorem{problem}{Problem}
\newtheorem{definition}{Definition}
\newtheorem{lemma}{Lemma}
\newtheorem{proposition}{Proposition}
\newtheorem*{previous theorem}{Theorem~1 of \cite{journal_paper}}
\newcommand{\Exp}[1]{\mathbb{E}\left[ #1 \right]} 
\newcommand{\Var}[1]{\mathbb{V}\left[ #1 \right]} 
\newcommand{\Exppi}[1]{\mathbb{E}\left[ #1 \right]} 
\title{A Reinforcement Learning Approach \\ to Age of Information \\ in Multi-User Networks with HARQ}
\author{
\IEEEauthorblockN{Elif Tu\u{g}\c{c}e Ceran, Deniz G{\"u}nd{\"u}z, and Andr\'as Gy\"orgy}
}
\newcommand{\highlight}{}
\begin{document}

\maketitle

\vspace{-1cm}
\begin{abstract}
Scheduling the transmission of time-sensitive information from a source node to multiple users over error-prone communication channels is studied with the goal of minimizing the long-term average \textit{age of information (AoI)} at the users.  A long-term average resource constraint is imposed on the source, which limits the average number of transmissions. The source can transmit only to a single user at each time slot, and after each transmission, it receives an instantaneous ACK/NACK feedback from the intended receiver, and decides when and to which user to transmit the next update. Assuming the channel statistics are known, the optimal scheduling policy is studied for both the standard automatic repeat request (ARQ) and hybrid ARQ (HARQ) protocols. Then, a \textit{reinforcement learning} (RL) approach is introduced to find a near-optimal policy, which does not assume any \textit{a priori} information on the random processes governing the channel states.  Different RL methods including average-cost SARSA with linear function approximation (LFA), upper confidence reinforcement learning (UCRL2), and deep Q-network (DQN) are applied and compared through numerical simulations.
\smallskip

\textbf{Index Terms:} Age of information, hybrid automatic repeat request (HARQ), constrained Markov decision process, reinforcement learning, Whittle index.
\end{abstract}
\blfootnote{Part of this work is was presented at the  IEEE International Symposium on Personal, Indoor and Mobile Radio Communications, Bologna, Italy, September 2018 \cite{pimrc_paper}.
This work was supported in part by the European Research Council (ERC) Starting Grant BEACON (grant agreement no. 725731).
E.~T.~Ceran is with Imperial College London, UK, (She is currently with Middle East Technical University, Turkey, email: \texttt{elifce@metu.edu.tr}), D.~G\"und\"uz is with 
Imperial College London, UK (email: \texttt{d.gunduz@imperial.ac.uk}. A.~Gy\"orgy is with DeepMind, 
UK (email: \texttt{agyorgy@google.com}).}


\section{Introduction}

We consider a status update system, in which a source node wants to communicate the state of a time-varying process to multiple users.  The timeliness of the information at each user is measured by the \emph{age of information} (AoI), defined as the time elapsed since the most recent status update received by that user was generated at the source~\cite{Altman2010, Kaul2011, Kaul2012}. The goal of the source is to minimize the \emph{average} AoI across the users. Most of the earlier work on AoI consider queue-based models, in which the status updates arrive at the source node randomly according to a Poisson process, and are stored in a buffer before being transmitted to the destination \cite{Kaul2011, Kaul2012, Najm2017,Beytur2019}. Instead, we consider the so-called \emph{generate-at-will} model, in which the source can sample the process at any time and generate a fresh status update \cite{Altman2010,Tan2015, Sun2016, hsuage2017, Yates2017,wcnc_paper, journal_paper, Kadota2018, infocom_paper}.

\begin{figure}[!t]
\centering
\includegraphics[scale=0.5]{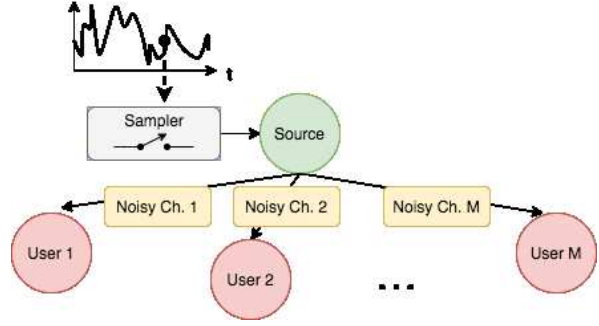}
\caption{The system model of a multi-user status update system over error prone links.}
\label{fig:system}
\end{figure}

We address the scheduling of status updates in a multi-user network under a transmission-rate constraint. This constraint is motivated by the fact that sensors sending status updates usually have limited energy supplies (e.g., are powered via energy harvesting \cite{Gunduz2014, Tan2015, infocom_paper}); hence, they cannot send an unlimited number of updates. We assume that the source can transmit to only a single user at each time slot, and the communication channels experience fading. While the source does not have channel state information, we assume the presence of a single bit perfect feedback link from each user to the source terminal, across which the corresponding receiver can send ACK/NACK feedback after each transmission. We consider both the standard ARQ and the hybrid ARQ (HARQ) protocols. Note that, in the former, the same transmission is repeated until it is successfully received; however, in a status update system no retransmission takes place, as it is always better to send a fresh status update. On the other hand, under HARQ, one may repeat previously sent packets as the probability of correct decoding increases with multiple transmissions. First, we assume that the success probability of each transmission attempt is known beforehand, in which case the source can judiciously decide when to transmit, or, in the case of HARQ, to retransmit or  discard failed information and send a fresh update. Then, we consider scheduling status updates over  unknown channels, in which case the success probabilities of transmission attempts are not known \textit{a priori}, and must be learned in an online fashion using the ACK/NACK feedback signals. 


AoI in multi-user networks has been studied in \cite{He2017,hsuage2017,Kadota2018, Yates2017,Kaul_multiaccess, Yates_multicast,bedewy2020}. It is shown in \cite{He2017} that the scheduling problem, where a set of links that share a common channel and the transmitter at each link contains a given number of packets
with time stamps from an information source, is NP-hard. Scheduling transmissions to multiple receivers is investigated in \cite{hsuage2017}, focusing on a perfect transmission medium, and the optimal scheduling algorithm is shown to be of threshold-type on the AoI. Average AoI has also been studied when status updates are transmitted over unreliable multiple-access channels \cite{Kaul_multiaccess} or multi-cast networks \cite{Yates_multicast}. A source node sending time-sensitive information to a number of users through unreliable channels is considered in \cite{Kadota2018}, where the problem is formulated as a \textit{restless multi-armed bandit} (RMAB), and a suboptimal Whittle Index (WI) policy is proposed. 


%

In \cite{Najm2017, Yates2017, Arafa2019}, AoI at a single user is studied when status updates are transmitted over an erasure channel with retransmissions. Two HARQ protocols are considered to combat erasures: infinite incremental redundancy (IIR) and fixed redundancy (FR) coding. The IIR protocol represents a system in which a status update packet is encoded with $k_s$ symbols ratelessly, such that the transmission of an update continues until $k_s$ symbols are received. An FR protocol represents an $(n_s, k_s)$-maximum distance separable (MDS) code, where each update is transmitted as an $n_s$ symbol packet, and the packet can be decoded if at least $k_s$ symbols are received. An information theoretic approach to the AoI problem is taken in~\cite{Najm2019}, where the optimal average AoI is characterized when no feedback is assumed.  In this paper, we instead consider scheduling of status updates to multiple users under a transmission rate constraint for general HARQ protocols, and we study standard ARQ and FR HARQ protocols as a special case of HARQ. In our earlier work, we studied a point-to-point status update system under a transmission-rate constraint \cite{wcnc_paper,journal_paper}, and showed that the optimal policy is a randomized stationary policy with randomization at most one state. As opposed to the single user setting, in the multi-user scenario considered in this paper, the source has to decide not only when to transmit, but also to which user to transmit, significantly increasing both the state and action spaces of the underlying problem.



Most prior literature on AoI assume perfect statistical knowledge of the random processes governing the status update system. However, in most practical systems (e.g., sensors embedded in unknown or time-varying environments), the characteristics of the system are not known \emph{a priori}, and must be learned. A limited number of recent works consider the unknown or time-varying characteristics of status update systems, and apply a learning-theoretic approach \cite{hsuage2017,wcnc_paper, pimrc_paper, journal_paper, infocom_paper,Beytur2019, Leng2019, Saad2019, Bennis2019}. The scheduling decisions with multiple receivers over a perfect channel is investigated in \cite{hsuage2017, Beytur2019}, where the goal is to learn data arrival statistics. Q-learning is used for a generate-at-will model in \cite{hsuage2017}, while policy gradients and DQN methods are used for a queue-based multi-flow AoI-optimal scheduling problem in~\cite{Beytur2019}. In~\cite{Leng2019}, policy gradients and
DQN methods are employed for AoI minimization in a wireless ad-hoc network, where nodes exchange status updates with one another over a shared spectrum. Average cost reinforcement learning (RL) algorithms are proposed in  \cite{journal_paper,pimrc_paper} to learn the decoding error probabilities in a status update system with HARQ. The work in \cite{infocom_paper} exploits RL methods in order to learn both the decoding error probabilities and the energy harvesting characteristics. 

To the best of our knowledge,  the average AoI with HARQ is studied for the first time for a multi-user system under a long-term average resource constraint. Similarly, there is no prior work in the literature which employs and compares the performances of the various RL methods exploited in this paper.  The main contributions of this paper can be summarized as follows:

\begin{itemize}
\item Both retransmission and pre-emption following a failed transmission are considered, corresponding, respectively, to the HARQ and ARQ protocols, and the structure of the optimal policy is determined.
\item The multi-user scheduling problem is shown to be indexable, and suboptimal WI policies are derived in closed-form for the standard ARQ and FR HARQ protocols.
\item Lower bounds on the average AoI are proposed for the standard ARQ and the FR HARQ protocols under a resource constraint.
\item We employ average-cost RL algorithms, in particular, \textit{average-cost SARSA}, \textit {upper confidence reinforcement learning} (UCRL2), \textit{average-cost SARSA with softmax and linear function approximation} (LFA) and \textit{deep reinforcement learning} (DRL) to learn the optimal scheduling decisions when the transmission success probabilities are unknown. 
\item Extensive numerical simulations are conducted in order to analyze the effect of the resource constraint, the network size, and the ARQ or HARQ mechanisms on the freshness of information, and the effectiveness of the proposed RL algorithms. 
\end{itemize}


\section{System Model and Problem Formulation}
\label{sec:system}



We consider a slotted status update system, where a source terminal monitors a time-varying process and sends updates about the process' state to multiple users.
In every time slot, the source terminal is able to generate an update at the beginning of the slot, and can transmit a status update to (at most) one of the $M$ users. 
This can be either because of dedicated orthogonal links to the users, for example, in a wired network, or because the users are interested in distinct information. A transmission attempt of a status update to a single user takes constant time, which is assumed to be equal to the duration of one time slot. 



We assume that the state of each of the channels changes randomly from one time slot to the next in an independent and identically distributed (i.i.d.) fashion, and the channel state information is available only at the corresponding receivers. We assume the availability of an instantaneous error-free single-bit ACK/NACK feedback from each user to the source. {\highlight  Successful reception of the status update at the end of time slot $t$ is acknowledged by an ACK signal (denoted  by $K_t= 1$),  while  a  NACK  signal is sent in case of a failure (denoted by $K_t= 0$).} In the standard ARQ protocol, a packet is retransmitted after each NACK feedback, until it is successfully decoded. However, in the AoI framework there is no point in retransmitting a failed out-of-date status packet if it has the same error probability as a fresh status update. Hence, the source always removes a failed status signal, and transmits a fresh update. On the other hand, in HARQ, signals from  previous transmission attempts are combined, and therefore the probability of error decreases with every retransmission~\cite{harq2003}.


{\highlight  In practice, the utility of status updates typically becomes zero beyond a certain age, hence we assume that the age is bounded; as such, we assume that the maximum age is $N<\infty$. Assuming that the most up-to-date packet received by the $j^{th}$ user ($j \in [M] \triangleq \{1,\ldots,M\}$) before time slot $t$ was generated in slot  $U_j(t)$, the AoI at the receiver of user $j$ at the beginning of time slot $t$ is defined as $\delta^{rx}_{j,t}\triangleq \min\{t-U_j(t),N\}\in [N] \triangleq \{1,\ldots,N\}$.}

{\highlight At each time slot $t$, the source node takes an action $a_t$ from the set of actions $\mathcal{A}=\{\idle,\new_1,\retx_1,\ldots,$ $ \new_M,\retx_M\}$: in particular, the source can i) remain idle ($a_t=\idle$); ii) generate and transmit a new status update to the $j^{th}$ user ($a_t=\new_j$, $j\in [M]$); or, iii) retransmit the most recent failed status update to the $j^{th}$ user ($a_t=\retx_j$, $j\in [M]$). Note that $|\mathcal{A}|=2M+1$. For the $j^{th}$ user, the probability of error after $r$ retransmissions, denoted by $g_j(r)$, depends on $r$ and the particular HARQ scheme used \cite{harq2003}. In any reasonable HARQ strategy, $g_j(r)$ is non-increasing in $r$, i.e., $1>g_j(r) \geq g_j(r')> 0$ for all $r \leq r'$.  We will denote the maximum number of retransmissions by $r_{max}$. We note that standard HARQ methods only allow a finite maximum number of retransmissions (e.g., $r_{max} = 3$ \cite{IEEEstandard,LTEmax}).

Let $\delta^{tx}_{j,t}$ denote the number of time slots elapsed since the generation of the most recently transmitted (successfully or not) packet to user $j$ at the transmitter, while recall that $\delta^{rx}_{j,t}$ denote the AoI of the most recently received status update at the receiver of user $j$. $\delta^{tx}_{j,t}$ resets to 1 if a new status update is generated for user $j$ at time slot $t-1$, and increases by one (up to $N$) otherwise, i.e., 
\begin{align*}
\delta^{tx}_{j,t+1}=
\begin{cases}
1 &\textrm{ if }  a_t=\new_j; \\
\min(\delta^{tx}_{j,t}+1,N) &\textrm{ otherwise. }
\end{cases}
\end{align*}
On the other hand, the AoI at the receiver side evolves as follows:
\begin{align*}\delta^{rx}_{j,t+1}=
\begin{cases}
1 &\textrm{if } a_t=\new_j \textrm{ and } K_t=1; \\
\min(\delta^{tx}_{j,t}+1,N) &\textrm{if } a_t=\retx_j \textrm{ and } K_t = 1; \\
\min(\delta^{rx}_{j,t}+1,N) & \textrm{otherwise. }
\end{cases}
\end{align*}
Note that once the AoI at the receiver is at least as large as at the transmitter, this relationship holds forever; thus it is enough to consider cases when $\delta^{rx}_t \ge \delta^{tx}_t$.

Therefore, $\delta^{rx}_{j,t}$ increases by 1 when the source chooses to transmit to another user, or if the transmission fails, while it decreases to 1, or, in the case of HARQ, to $\min(\delta^{tx}_{j,t}+1,N)$, when a status update is successfully decoded. Also, $\delta^{tx}_{j,t}$ increases by $1$ if the source chooses not to generate a new packet and transmit it to user $j$ ($a_t\neq \new_j$). 

For the $j^{th}$ user, let  $r_{j,t}\in \{0,\ldots,r_{max}\}$ denote the number of previous transmission attempts of the most recent packet. Thus, the number of retransmissions is zero for a newly sensed and generated status update and increases up to $r_{max}$ as we keep retransmitting the same packet. Then, the state of the system can be described by the vector $s_t \triangleq (\delta^{rx}_{1,t}, \delta^{tx}_{1,t}, r_{1,t}, \ldots, \delta^{rx}_{M,t}, \delta^{tx}_{M,t}, r_{M,t})$, where $s_t$ belongs to the set of possible states $\mathcal{S} \subset ([N] \times [N] \times [r_{max}])^M$.  }

If no resource constraint is imposed, remaining idle is clearly a suboptimal action. However, in practice, continuous transmission is typically not possible due to energy or interference constraints. To model these situations, we impose a constraint on the average number of transmissions, denoted by $\lambda \in (0,1]$.
{\highlight 
This leads to a constrained Markov decision proccess (CMDP) formulation, defined by the 5-tuple $\big(\mathcal{S}, \mathcal{A},  \mathcal{P}, c, d\big)$: The countable set of states  $\mathcal{S} $ and the finite set of actions $\mathcal{A}$ have already been defined. $\mathcal{P}$ refers to the transition kernel and can be summarized as follows:
\begin{align}
\mathcal{P}_{s,s'}(a)=
\begin{cases}
 1  &\mathrm{ if } ~a=\idle, ~\delta^{rx'}_i=\min\{\delta^{rx}_i+1,N\}, ~\delta^{tx'}_i=\min\{\delta^{tx}_i+1,N\},
\\ &~r'_i=r_i, ~\forall i;   \\ 
1-g_j(0)   &\mathrm{ if } ~a=\new_j,  ~\delta^{rx'}_j=1, ~\delta^{tx'}_j=1, ~r'_j=0, 
~\delta^{rx'}_i=\min\{\delta^{rx}_i+1,N\}, \\ &~\delta^{tx'}_i=\min\{\delta^{tx}_i+1,N\}, ~r'_i=r_i, ~\forall i \neq j; \\ 
g_j(0)   &\mathrm{ if } ~a=\new_j, ~\delta^{rx'}_j=\min\{\delta^{rx}_j+1,N\}, ~\delta^{tx'}_j=1, ~r'_j=1, ~r'_i=r_i; 
\\ &~\delta^{rx'}_i=\min\{\delta^{rx}_i+1,N\}, ~\delta^{tx'}_i=\min\{\delta^{tx}_i+1,N\}, \forall i \neq j; \\ 
1-g_j(r_j)   &\mathrm{ if } ~a=\retx_j, ~\delta^{rx'}_j=\delta^{tx}_j+1,
~\delta^{tx'}_j=\min\{\delta^{tx}_j+1,N\},  ~r'_j=0, ~r'_i=r_i,
\\ &~\delta^{rx'}_i=\min\{\delta^{rx}_i+1,N\}, ~\delta^{tx'}_i=\min\{\delta^{tx}_i+1,N\},  \forall i \neq j; \\ 
g_j(r_j)   &\mathrm{ if } ~a=\retx_j, ~\delta^{rx'}_j=\min\{\delta^{rx}_j+ 1, N\}, ~\delta^{tx'}_j=\min\{\delta^{tx}_j+1,N\},
 \\ &~\delta^{rx'}_i=\min\{\delta^{rx}_i+1, N\}, ~\delta^{tx'}_i=\min\{\delta^{tx}_i+1,N\},  \\ &~r'_j = \min\{r'_j + 1,r_{max}\}, ~r'_i = r_i,
 \forall i \neq j;\\
0 & \text{otherwise}, \\[0.75em]
\end{cases}\label{eq:transitions}
\end{align}
where  $\mathcal{P}_{s,s'}(a) = \Pr(s_{t+1}\!=\!s' \mid s_t\!=\!s, a_t\!=\!a)$ is the probability that action $a \in \mathcal{A}$  in state $s\in\mathcal{S}$  at time $t$  leads to state $s'\in\mathcal{S}$ at time $t+1$ (the components of state $s'$ are denoted by a prime in the above equation). The instantaneous cost function $c: \mathcal{S} \times \mathcal{A} \rightarrow \mathbbm{R}$ is defined as the weighted sum of the AoIs at the multiple users, independently of $a$. Formally, $c(s,a)=\Delta \triangleq w_1\delta^{rx}_{1}+\cdots+ w_M\delta^{rx}_{M}$, where the weight $w_j>0$ represents priority of user $j$. The instantaneous transmission cost $d:\mathcal{A} \to \mathbb{R}$ is defined as $d(\idle)=0$ and $d(a)=1$ if $a\neq\idle$.
}

Naturally, as reflected by the system model, for every user we keep only the most recent status update packet: thus, the number of retransmissions is zero for a newly sensed and generated status update and increases up to $r_{max}$ as we keep retransmitting the same packet. {\highlight  
If a maximum of $r_{max}$ retransmissions is reached, the packet can still be retransmitted; however, due to the protocol, only the last $r_{max}$ retransmissions are used in the decoding, hence the retransmission count saturates at $r_{max}$. Figure~\ref{fig:example} illustrates an example showing the actions and state transitions for a 2-user system.


\begin{figure}
    \centering
    \includegraphics[scale=0.35]{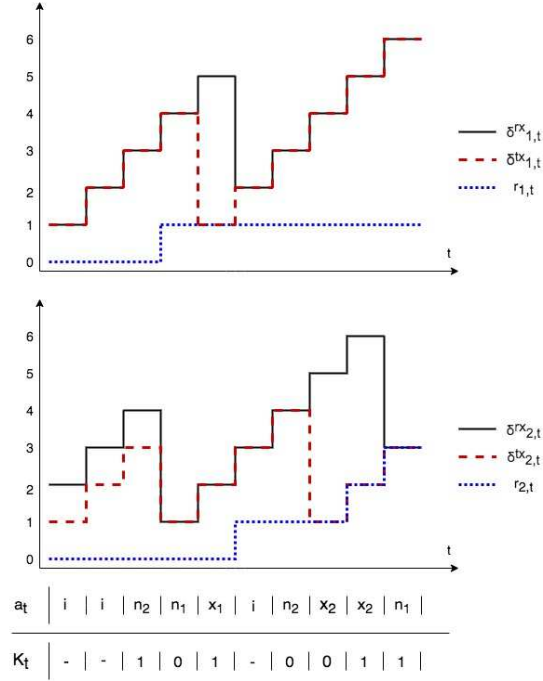}
    \vspace{-0.2in}
    \caption{An example illustrating the AoIs and retransmission numbers for a 2-user network in the presence of ACK/NACK feedback}
    \label{fig:example}
\end{figure}



A stationary \emph{policy} $\pi: \mathcal{S} \times \mathcal{A} \rightarrow [0,1]$ maps each state $s\in \mathcal{S}$ to an action $a\in \mathcal{A}$ with probability $\pi(a|s)$ ($\pi(\cdot|s)$ is a distribution over $\mathcal{A}$). 
We use $s_t^{\pi}=({\delta^{rx}_{1,t}}^{\pi},{\delta^{tx}_{1,t}}^{\pi},r_{1,t}^{\pi},\ldots,{\delta^{rx}_{M,t}}^{\pi},{\delta^{tx}_{M,t}}^{\pi},r_{M,t}^{\pi})$ and $a_t^{\pi}$ to denote the sequences of states and actions, respectively, induced by policy $\pi$, while $\Delta^{\pi}_t\triangleq\sum_{j=1}^M {w_j{\delta^{rx}_{j,t}}^{\pi}}$ denotes the instantaneous weighted cost. }

The infinite horizon expected weighted average AoI for policy $\pi$ starting from the initial state $s_0 \in \mathcal{S}$ is defined as
\vspace{-0.2in}
\begin{align}
J^{\pi}(s_0) \triangleq \limsup_{T\rightarrow \infty }\frac{1}{T}\Exppi{\sum_{t=1}^T{\Delta^{\pi}_t}\Big|s_0},
\end{align}
while the corresponding average number of transmissions is given by
\begin{align}
C^{\pi}(s_0) \triangleq \limsup_{T\rightarrow \infty }\frac{1}{T}\Exppi{\sum_{t=1}^T{\mathbbm{1}[a^{\pi}_t \neq \idle]  }\Big|s_0}~.\label{eq:constraint}
\end{align}

We are interested in minimizing $J^{\pi}(s_0)$ given a constraint $\lambda$ on the average number of transmissions $C^{\pi}(s_0)$, leading to the following CMDP optimization problem:
\begin{problem} \rm
$\underset{\pi\in \Pi}{\mathrm{ Minimize }}  ~J^{\pi}(s_0)$ over $\pi\in \Pi$ such that
$C^{\pi}(s_0) \le \lambda.$ 
\label{problem}
\end{problem} 
Without loss of generality, we assume that the state at the beginning of the problem is $s_0=(1,1,0,2,1,0,\ldots,M,1,0)$; and we omit $s_0$ from the notation for simplicity. A policy $\pi^*\in \Pi$ is called optimal if $J^*\triangleq J^{\pi^*} \leq J^{\pi}$ for all $\pi\in \Pi$ and we are interested in finding optimal policies. 

\if0
Before formally defining the transition function $\mathcal{P}$ in \eqref{eq:transitions} for our AoI problem, we present a simple observation that simplifies $\mathcal{P}$ similar to \cite{journal_paper}: Retransmitting a packet immediately after a failed attempt is better than retransmitting it after waiting for some slots. This is true since waiting increases the age, without increasing the success probability. 
\begin{proposition}
\label{p1}
For any policy $\pi$ there exists another policy $\pi'$ (not necessarily distinct from $\pi$) such that $J^{\pi'}(s_0) \le J^{\pi}(s_0)$, $C^{\pi'}(s_0) \le C^{\pi}(s_0)$, and $\pi'$ takes a retransmission action only following a failed transmission, that is, the probability  $\Pr(a^{\pi'}_{t+1}=\retx|a^{\pi'}_t=\idle)= 0$.
\end{proposition}

 The property in Proposition~\ref{p1} is enforced by the first equation in \eqref{eq:transitions} since retransmissions are not allowed in states with $r_j=0$, $\forall j$). The transition probabilities are then given as follows.
 {\highlight 
\begin{align}
\label{eq:transitions}
\mathcal{P}_{s,s'}(a)=
\begin{cases}
 1  &\mathrm{ if } ~a=\idle, ~\delta^{rx'}_i=\min\{\delta^{rx}_i+1,N\}, ~\delta^{tx'}_i=\min\{\delta^{tx}_i+1,N\},
\\ &~r'_i=r_i, ~\forall i;   \\ 
1-g_j(0)   &\mathrm{ if } ~a=\new_j,  ~\delta^{rx'}_j=1, ~\delta^{tx'}_j=1, ~r'_j=0, 
~\delta^{rx'}_i=\min\{\delta^{rx}_i+1,N\}, \\ &~\delta^{tx'}_i=\min\{\delta^{tx}_i+1,N\}, ~r'_i=r_i, ~\forall i \neq j; \\ 
g_j(0)   &\mathrm{ if } ~a=\new_j, ~\delta^{rx'}_j=\min\{\delta^{rx}_j+1,N\}, ~\delta^{tx'}_j=1, ~r'_j=1, ~r'_i=r_i; 
\\ &~\delta^{rx'}_i=\min\{\delta^{rx}_i+1,N\}, ~\delta^{tx'}_i=\min\{\delta^{tx}_i+1,N\}, \forall i \neq j; \\ 
1-g_j(r_j)   &\mathrm{ if } ~a=\retx_j, ~\delta^{rx'}_j=\delta^{tx}_j+1,
~\delta^{tx'}_j=\min\{\delta^{tx}_j+1,N\},  ~r'_j=0, ~r'_i=r_i,
\\ &~\delta^{rx'}_i=\min\{\delta^{rx}_i+1,N\}, ~\delta^{tx'}_i=\min\{\delta^{tx}_i+1,N\},  \forall i \neq j; \\ 
g_j(r_j)   &\mathrm{ if } ~a=\retx_j, ~\delta^{rx'}_j=\min\{\delta^{rx}_j+ 1, N\}, ~\delta^{tx'}_j=\min\{\delta^{tx}_j+1,N\},
 \\ &~\delta^{rx'}_i=\min\{\delta^{rx}_i+1, N\}, ~\delta^{tx'}_i=\min\{\delta^{tx}_i+1,N\},  \\ &~r'_j = \min\{r'_j + 1,r_{max}\}, ~r'_i = r_i,
 \forall i \neq j;\\
0 & \text{otherwise}, \\[0.75em]
\end{cases}
\end{align} }
where  $\mathcal{P}_{s,s'}(a) = \Pr(s_{t+1}\!=\!s' \mid s_t\!=\!s, a_t\!=\!a)$ is the probability that action $a \in \mathcal{A}$  in state $s\in\mathcal{S}$  at time $t$  leads to state $s'\in\mathcal{S}$ at time $t+1$. The instantaneous cost function $c: \mathcal{S} \times \mathcal{A} \rightarrow \mathbbm{R}$ is defined as the weighted sum of AoI at the receivers of multiple users, independently of $a$. Formally, $c(s,a)=\Delta \triangleq w_1\delta^{rx}_{1}+\cdots+ w_M\delta^{rx}_{M}$, where the weight $w_j>0$ represents priority of user $j$. The instantaneous transmission cost $d:\mathcal{A} \to \mathbb{R}$ is defined as $d(\idle)=0$ and $d(a)=1$ if $a\neq\idle$.
\fi

\if0
Moreover, if a status update is transmitted to user $j$ at time $t$ (retransmission or a new packet), and the transmission is not successful, the optimal action at time $t+1$ cannot be retransmission to another user. Hence, retransmissions are in blocks and it is not allowed to take a retransmission action either after an idle time slot or after transmission to another user in \eqref{eq:transitions}. 
\fi






\section{Lagrangian Relaxation and the Structure of the Optimal Policy}
\label{sec:structure}

A detailed treatment of finite-state finite-action discounted MDPs is considered in \cite{Puterman_book}, but here we need more general results that apply to MDPs and CMDPs with average expected cost~\cite{Puterman_book,Altman}. 
Below we follow \cite{Altman} and \cite{Ross_1985} to characterize the optimal policy.

We will need two well-known concepts for MDPs \cite{Puterman_book,Altman}: An MDP is \emph{communicating} if for any two states $s,s'$ there exists a deterministic policy $\pi$ such that $s'$ is reachable from $s'$ with positive probability following $\pi$. A stronger concept is the \emph{unichain} property, which we define for the more general class of CMDPs: a finite CMDP is unichain if any feasible policy (i.e., a policy that satisfies the resource constraint) induces a finite-state Markov chain that contains a single recurrent class and possibly, some transient states. We will show below that our MDP is communicating (cf. Theorem~\ref{thm:comm}) and that it is unichain under the ARQ protocol (cf. Theorem~\ref{thm_mixture}).



 To solve the constrained MDP, we start by rewriting Problem~\ref{problem} in its Lagrangian form. The average Lagrangian cost of a policy $\pi$ with Lagrange multiplier $\eta \ge 0$, denoted by $L^{\pi}_{\eta}$, is defined as
 \vspace{-0.1in}
\begin{align}
L^{\pi}_{\eta}=\lim_{T\rightarrow \infty }\frac{1}{T}\left(\Exp{\sum_{t=1}^T{\Delta^{\pi}_t}}\!+\!\eta \Exp{\sum_{t=1}^T{\mathbbm{1}[a^{\pi}_t\neq \idle]}}\right) \label{eq:lag}
\end{align}
and, for any $\eta$, the optimal achievable cost is defined as $L_{\eta}^*\triangleq \inf_{\pi}{L^{\pi}_{\eta}}$.
This formulation is equivalent to an unconstrained countable-state average-cost MDP with instantaneous (overall) cost  $\Delta_t^\pi+\eta\mathbbm{1}[a^{\pi}_t \neq \idle]$.


If $\lambda=1$, a transmission (new update or retransmission) is allowed in every time slot, and instead of a CMDP we have a finite-state MDP with bounded cost. Then it follows directly from Theorem~8.4.3 and Theorem~8.4.5 of~\cite{Puterman_book} that if the MDP is unichain (which is the case for the ARQ protocol as shown in Theorem~\ref{thm_mixture}), there exists an optimal deterministic policy that satisfies the well-known Bellman equations. In this section, we focus on the more interesting constrained problem. The constraint on the transmission cost is less than or equal to one (i.e., $\lambda\leq 1$), then we have $\eta\geq 0$, which will be assumed throughout the paper. A policy $\pi$ is called $\eta$-optimal if it achieves $L^*_\eta$.

\begin{theorem}
\label{thm:comm}
An optimal stationary policy $\pi^*_n$ minimizing \eqref{eq:lag} (and hence achieving $L_{\eta}^*$) exists for the unconstrained MDP with Lagrangian parameter $\eta$. 
\end{theorem}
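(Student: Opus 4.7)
\medskip

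\noindent\textbf{Proof proposal.} The plan is to reduce the claim to a standard existence result for average-cost MDPs with finite state and action spaces and bounded one-step costs. Observe first that the Lagrangian MDP obtained after relaxation has state space $\mathcal{S}\subset([N]\times[N]\times[r_{max}])^M$ and action space $\mathcal{A}=\{\idle,\new_1,\retx_1,\ldots,\new_M,\retx_M\}$, both finite, and the per-step Lagrangian cost $\Delta_t+\eta\,\mathbbm{1}[a_t\neq\idle]$ is bounded by $(\sum_j w_j)N+\eta$. Under these conditions, a classical result (Puterman, Theorem~8.4.5, or Ross~\cite{Ross_1985}) asserts that if the MDP is communicating, then the average-cost optimality equation admits a solution, the optimal gain $L^*_\eta$ is constant in the initial state, and there exists a \emph{stationary deterministic} policy attaining it.

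Thus the task reduces to verifying the communicating property: for every pair $s,s'\in\mathcal{S}$ there is a deterministic policy under which $s'$ is reachable from $s$ with positive probability. I would construct such a policy in three stages. Stage~1: starting from $s$, apply the $\idle$ action for $N$ consecutive slots; by the first line of~\eqref{eq:transitions} this drives every $\delta^{rx}_i$ and $\delta^{tx}_i$ to $N$, while leaving retransmission counters untouched — which is a mild obstacle I address in Stage~2. Stage~2: to set $r'_i$ for each user $i$, send $\new_i$ followed by $r'_i-1$ failed $\retx_i$ actions if $r'_i\ge 1$, or a single successful $\new_i$ if $r'_i=0$; since $0<g_i(r)<1$ for all $r$, both outcomes in each transition of~\eqref{eq:transitions} have strictly positive probability, so this sequence occurs with positive probability. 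Stage~3: choose an ordering of users (last-touched first) so that, after performing the final calibrating transmissions in reverse order of their target $\delta^{rx}_i$ values (largest first), the accumulated slot-by-slot increments of the non-selected users' AoIs line up with the desired $\delta^{rx'}_i$ and $\delta^{tx'}_i$. Because each component is capped at $N$ and can be decreased only by a successful transmission, a careful scheduling that interleaves $\idle$ padding with appropriately timed $\new_j$ and $\retx_j$ attempts realizes any target tuple.

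With the communicating property established, I would invoke Theorem~8.4.5 of~\cite{Puterman_book} (or equivalently the treatment in Chapter~5 of~\cite{Altman} and the conditions in~\cite{Ross_1985}) to conclude: the average-cost Bellman equation
\begin{equation*}
h(s)+L_\eta^* \;=\; \min_{a\in\mathcal{A}}\Big\{c(s,a)+\eta\,d(a)+\sum_{s'\in\mathcal{S}}\mathcal{P}_{s,s'}(a)\,h(s')\Big\}
\end{equation*}
has a solution $(h,L_\eta^*)$, and any stationary deterministic policy $\pi^*_\eta$ selecting minimizing actions in every state attains $L^{\pi^*_\eta}_\eta=L_\eta^*$.

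The main obstacle is Stage~3 of the communicating argument: because every action increments the AoIs of all non-targeted users simultaneously, one has to schedule the sequence of $\new_j$/$\retx_j$ actions (and intervening $\idle$s) so that the final vector of $(\delta^{rx}_i,\delta^{tx}_i)$ matches the target exactly. The saturation at $N$ turns out to help rather than hinder, since any over-incremented component can simply be padded by further $\idle$s until all other components catch up to their targets. Once this combinatorial sequencing is written out explicitly, the rest of the proof is a direct appeal to the cited standard theorems.
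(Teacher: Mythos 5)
Your overall strategy coincides with the paper's: establish that the Lagrangian (unconstrained) MDP is communicating, then invoke a standard existence theorem for finite-state, finite-action average-cost MDPs with bounded cost to obtain a stationary optimal policy satisfying \eqref{eq:Bellman}. Where you genuinely differ is in how the communicating property is verified. The paper takes a shortcut: it exhibits a single deterministic policy (always transmit to the lowest-indexed user whose age or retransmission count is not yet saturated, sending a new packet if $r_j=0$ and retransmitting otherwise) that reaches the saturated state $(N,N,r_{max},\ldots,N,N,r_{max})$ from every state with positive probability in at most $M\max\{N,r_{max}\}$ steps, and then appeals to Proposition~8.3.1 of \cite{Puterman_book}; it never needs to realize an arbitrary target state directly. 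You instead attempt a pairwise construction reaching an arbitrary $s'$ from an arbitrary $s$, and your Stage~3 is where the real work lies --- and where you should be careful: $\mathcal{S}$ is a \emph{proper} subset of $([N]\times[N]\times[r_{max}])^M$ (for instance, two users cannot both have $\delta^{rx}_j=1$ simultaneously since only one user is served per slot, and back-to-back retransmissions couple $\delta^{tx}_j$ to $r_j$), so ``realizes any target tuple'' must be read as ``any tuple in the reachable set,'' and the scheduling argument has to be phrased relative to that set. The paper's route buys a two-line verification at the price of leaning on the cited proposition; your route is more self-contained but requires characterizing or at least quantifying over the reachable states to close Stage~3. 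A final nit: for a communicating (not necessarily unichain) model the relevant existence result is the one the paper cites (Theorem~8.3.2 of \cite{Puterman_book}) rather than Theorem~8.4.5, whose hypothesis is the unichain property --- which the paper establishes only later, and only for the ARQ case (Theorem~\ref{thm_mixture}).
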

\begin{proof}
First, we show that the unconstrained MDP is communicating, that is, for every pair of $(s,s')\in\mathcal{S}$, there exists a deterministic policy under which $s'$ is accessible from $s$. 
It is easy to see that there exists a policy which induces a recurrent Markov chain: Consider the policy which always transmits to the user with the smallest index such that the corresponding AoI at the user is less than $N$ or the retransmission count is less than $r_{max}$, sending a new packet if the retransmission count is 0 and retransmitting if it is not. This policy gets to the state $(N,N,r_{max},\ldots,N,N,r_{max})$ from any other state with at least a fixed positive probability in at most $M \max\{N,r_{max}\}$ steps, hence it induces a recurrent Markov chain. It follows than from Proposition 8.3.1 of~\cite{Puterman_book} that the MDP is communicating.
Then, by Theorem 8.3.2 of~\cite{Puterman_book}, an optimal stationary policy satisfying \eqref{eq:Bellman} exists.
\end{proof}

On the other hand, if the MDP is unichain, we can obtain stronger results describing the structure of the optimal policy. In this case, there exists a function $h_{\eta}(s)$, called the \textit{differential cost function}, satisfying the so-called \emph{Bellman optimality} equations
\vspace{-0.1in}
\begin{equation}
\label{eq:Bellman}
h_{\eta}(s)+L^*_\eta=\min_{a\in\mathcal{A}}\big(\Delta+\eta \cdot \mathbbm{1}[a \neq \idle]+\Exp{h_{\eta}(s')|s,a}\big),~ \forall s \in \mathcal{S}, 
\end{equation}
where $s'\in\mathcal{S}$ is the next state obtained from $s$ after taking action $a$ \cite{Puterman_book}. 
We also introduce the \textit{state-action cost function} defined as
\vspace{-0.1in}
\begin{align}
Q_{\eta}(s,a)\triangleq \Delta+\eta \cdot \mathbbm{1}[a \neq \idle]+\Exp{h_{\eta}(s')|s,a}, ~\forall s\in \mathcal{S}, a \in \mathcal {A}.
\label{eq:Bellman2} \vspace{-0.1in}
\end{align} 
Then, at each state $s\in\mathcal{S}$, the optimal deterministic policy takes the action achieving the minimum in \eqref{eq:Bellman2}:
\vspace{-0.1in}
\begin{align}
\label{eq:opt_eta}
\pi_{\eta}^*(s) &\in \argmin_{a\in\{\idle,\new,\retx\}} Q_{\eta}(s,a)~. 
\end{align}

{\highlight 
For a single-user point-to-point status update system, \cite{journal_paper} characterizes the structure of the optimal policy, and shows that there exists a stationary policy which randomizes in at most one state. 
Next we extend this result to multi-user status update systems for the ARQ protocol.

If we assume that the system adopts the standard ARQ protocol, that is, failed transmissions are discarded at the destination, then the state space reduces to  $(\delta^{rx}_1, \delta^{rx}_2, \ldots, \delta^{rx}_M)$ as $r_{j,t}=0,~\forall j,t$, and the action space to $\mathcal{A}=\{\idle , \new_1, \ldots, \new_M\}$. The probability of error of each status update is $p_j\triangleq g_j(0)$ for user $j$. 
State transitions in \eqref{eq:transitions} and the Bellman optimality equations can all be modified accordingly. Then we can extend Theorem~1 of \cite{journal_paper} to multi-user systems.

\begin{theorem}
There exists an optimal stationary policy for Problem~\ref{problem} under standard ARQ, which is optimal for the unconstrained problem considered in \eqref{eq:lag} for some $\eta=\eta^*$, and randomizes in at most one state. This policy can be expressed as a mixture of two deterministic policies $\pi^*_{\eta^*,1}$ and $\pi^*_{\eta^*,2}$ that differ in at most a single state $\hat{s}$, and are both optimal for the Lagrangian problem \eqref{eq:lag} with $\eta=\eta^*$. More precisely, there exist two deterministic policies $\pi^*_{\eta^*,1}$, $\pi^*_{\eta^*,2}$ as described above and $\mu \in [0,1]$, such that the mixture policy $\pi^*_{\eta^*}$, which selects, in state $\hat{s}$, $\pi^*_{\eta^*,1}(\hat{s})$ with probability $\mu$ and $\pi^*_{\eta^*,2}(\hat{s})$ with probability $1-\mu$, and otherwise follows these two policies (which agree in all other states)
is optimal for Problem~\ref{problem}, and the constraint in \eqref{eq:constraint} is satisfied with equality. \label{thm_mixture}
\end{theorem}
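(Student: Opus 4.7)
The plan is to reduce the ARQ variant of Problem~\ref{problem} to a finite CMDP with a single average-cost constraint and then invoke the classical structural theorem for finite unichain CMDPs from \cite{Altman} (Theorem~12.7 together with the occupation-measure LP of Chapter~4). Under ARQ the state collapses to the receiver-age vector $s=(\delta^{rx}_1,\ldots,\delta^{rx}_M)\in [N]^M$ and the action set reduces to $\mathcal{A}=\{\idle,\new_1,\ldots,\new_M\}$, yielding finite $\mathcal{S}$ and $\mathcal{A}$ with bounded costs $c(s,a)=\sum_j w_j\delta^{rx}_j$ and $d(a)=\mathbbm{1}[a\neq\idle]$. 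The argument then proceeds in three steps.

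First, I would upgrade the communicating property of Theorem~\ref{thm:comm} to the unichain property: every stationary (randomised) policy $\pi$ induces a Markov chain on $[N]^M$ with a single recurrent class. Each coordinate $\delta^{rx}_j$ is nondecreasing (capped at $N$) unless a $\new_j$-action succeeds, and each transmission fails with probability at least $\min_j p_j>0$; hence the ``all-stale'' state $\bar s=(N,\ldots,N)$ is reachable from every state with positive probability in at most $N$ steps via the event that all scheduled transmissions fail. This rules out two disjoint recurrent classes and gives unichain, so the long-run average cost in \eqref{eq:lag} is well defined and independent of the initial state.

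Second, I would invoke Altman's structural theorem: for a finite unichain CMDP with one average-cost constraint there exist $\eta^*\ge 0$ and an optimal stationary $\pi^*$ for Problem~\ref{problem} whose state-action occupation measure is a basic feasible solution of the LP in Chapter~4 of \cite{Altman}. Since that LP has $|\mathcal{S}|$ balance/normalisation equalities plus the single linear constraint $\sum_{s,a}x(s,a)d(a)\le\lambda$, any basic feasible solution has at most $|\mathcal{S}|+1$ positive coordinates; hence the optimal $\pi^*$ randomises in at most one state $\hat s$. Expressing $\pi^*$ as a convex combination of the two vertices of the supporting edge gives $\pi^*=\mu\pi^*_{\eta^*,1}+(1-\mu)\pi^*_{\eta^*,2}$ with both $\pi^*_{\eta^*,i}$ deterministic and agreeing outside $\hat s$. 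Because $L^\pi_{\eta^*}$ is linear in the occupation measure and attained on the face spanned by these two vertices (by LP duality), both are $\eta^*$-optimal for the unconstrained Lagrangian \eqref{eq:lag}, which is exactly the structure claimed. The link to Theorem~1 of \cite{journal_paper} is that this LP-vertex argument extends verbatim once $\mathcal{S}$ and $\mathcal{A}$ are replaced by their multi-user counterparts.

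Finally, I would argue that the constraint is tight. If $\eta^*=0$ the Lagrangian reduces to pure AoI minimisation, whose minimiser transmits in every slot and has $C^\pi=1$, which is infeasible unless $\lambda=1$, a case in which no randomisation is needed and the theorem holds trivially. Otherwise $\eta^*>0$, and complementary slackness $\eta^*(C^{\pi^*}-\lambda)=0$ forces $C^{\pi^*}=\lambda$. The main obstacle I anticipate is the clean verification of unichain for arbitrary randomised policies, since induced chains can be periodic and may fail to be irreducible on all of $[N]^M$; the correct notion is the single-recurrent-class form of unichain rather than ergodicity, and care is needed to apply Altman's Theorem~12.7 under its weakest regularity hypotheses.
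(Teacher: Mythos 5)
Your proposal is correct and follows essentially the same route as the paper: establish the unichain property by showing the all-stale state $(N,\ldots,N)$ is reached from every state with positive probability under any stationary policy (since every transmission fails with probability at least $\min_j p_j>0$), and then invoke Altman's structural theorem for finite unichain CMDPs with a single average-cost constraint. The extra detail you supply---the occupation-measure LP vertex count giving randomization in at most one state, and complementary slackness giving tightness of the constraint when $\eta^*>0$---is exactly the machinery behind the results the paper cites (Theorems 3.6 and 4.4 of Altman and Lemma 3.3 of Ross), so it elaborates rather than deviates from the paper's argument.
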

\begin{proof}
Since the state $(N,N,\ldots, N)$ is visited under every stationary policy with at least a fixed positive probability in at most $N M$ steps from every other state under the ARQ protocol (if $NM$ transmissions fail), the CMDP is unichain.
Then, by Theorem~4.4 of~\cite{Altman}, since Problem~\ref{problem} is feasible (i.e., there exists at least one policy which satisfies the constraint~\eqref{eq:constraint}), there exists an optimal stationary policy that is a mixture of two deterministic policies that differ in at most a single state with $\mu \in [0,1]$. From Section 4.4,  Theorem 3.6 and Theorem 4.4 of~\cite{Altman}, the mixture policy $\pi^*_{\eta^*}$, for any $\mu \in [0,1]$, also satisfies \eqref{eq:Bellman}, and is optimal for the unconstrained problem in \eqref{eq:lag} with $\eta=\eta^*$.
This completes the proof of the theorem.
\end{proof}

}

\if0
Focusing on deterministic policies, it is possible to characterize the optimal policy for our CMDP problem. For a single-user point-to-point status update system, \cite{journal_paper} characterizes the structure of the optimal policy, and shows that there exists a stationary policy which randomizes in at most one state. In this section, we derive the structure of the optimal policy for Problem~\ref{problem} based on \cite{Altman} and \cite{Ross_1985} and extend the results in \cite{journal_paper} to multi-user status update systems.

\begin{theorem}
There exists an optimal stationary policy for Problem~\ref{problem} that is optimal for the unconstrained problem considered in \eqref{eq:lag} for some $\eta=\eta^*$, and randomizes in at most one state. This policy can be expressed as a mixture of two deterministic policies $\pi^*_{\eta^*,1}$ and $\pi^*_{\eta^*,2}$ that differ in at most a single state $\hat{s}$, and are both optimal for the Lagrangian problem \eqref{eq:lag} with $\eta=\eta^*$. More precisely, there exist two deterministic policies $\pi^*_{\eta^*,1}$, $\pi^*_{\eta^*,2}$ as described above and $\mu \in [0,1]$, such that the mixture policy $\pi^*_{\eta^*}$, which selects, in state $\hat{s}$, $\pi^*_{\eta^*,1}(\hat{s})$ with probability $\mu$ and $\pi^*_{\eta^*,2}(\hat{s})$ with probability $1-\mu$, and otherwise follows these two policies (which agree in all other states)
is optimal for Problem~\ref{problem}, and the constraint in \eqref{eq:constraint} is satisfied with equality. \label{thm_mixture}
\end{theorem}
\begin{proof}

It is easy to check the unichain assumption \cite{Puterman_book,Altman} holds for Problem~\ref{problem}. Then, by Theorem~4.4 of~\cite{Altman}, since Problem~\ref{problem} is feasible (i.e., there exists at least one policy which satisfies the constraint~\eqref{eq:constraint}), there exists an optimal stationary policy that is a mixture of two deterministic policies that differ in at most a single state with $\mu \in [0,1]$. From Section 4.4,  Theorem 3.6 and Theorem 4.4 of~\cite{Altman}, the mixture policy $\pi^*_{\eta^*}$, for any $\mu \in [0,1]$, also satisfies \eqref{eq:Bellman}, and is optimal for the unconstrained problem in \eqref{eq:lag} with $\eta=\eta^*$.
This completes the proof of the theorem.
\end{proof}
\fi


Some other results in \cite{Altman,Ross_1985}  will be useful in determining $\pi^*_{\eta^*}$. For any $\eta>0$, let $C_{\eta}$ and $J_{\eta}$ denote the average number of transmissions and average AoI, respectively, for the optimal policy $\pi_{\eta}^*$. Note that, $C_\eta$ and $J_\eta$ can be computed directly by finding the stationary distribution of the chain, or estimated empirically by running the MDP with policy $\pi^*_\eta$. 



To determine the optimal policy, one needs to find $\eta^*$, and the policies $\pi^*_{\eta^*,1}$ and $\pi^*_{\eta^*,2}$, 
In fact, \cite{Ross_1985} shows that $\eta^*$ is defined as 
\vspace{-0.2in}
\begin{align}
\eta^* \triangleq \inf\{\eta>0:C_{\eta}\le \lambda\},
\label{eq:eta_star}
\end{align}
where the inequality $C_{\eta}\le \lambda$ is satisfied if it is satisfied for at least one
of $C^{\pi^*_{\eta^*,i}}$ for $i=1$ or $i=2$. By Lemma 3.3~of \cite{Ross_1985}, $\eta^*$ is finite, and $\eta^*>0$ if $\lambda<1$.


Theorem~\ref{thm_mixture} and the discussion above describe the general structure of the optimal policy.  A detailed discussion on finding both $\eta^*$ and the policies $\pi^*_{\eta^*,1}$ and $\pi^*_{\eta^*,2}$ are given in \cite{journal_paper}, which is not the focus of this paper.  In Section~\ref{sec:solution}, for practical implementation,  an iterative heuristic algorithm, also is employed based on the discussion in this section. 



\section{An Iterative Algorithm to Minimize AoI}
\label{sec:solution}


For a practical solution to  our problem, we can employ the \emph{relative value iteration} (RVI) \cite{Puterman_book} algorithm to solve \eqref{eq:Bellman} for any given $\eta$; and hence, find (an approximation of) the optimal policy $\pi^*_\eta$. To simplify the notation, the dependence on $\eta$ is suppressed in the algorithm for $h$ and $Q$. Note that a finite approximation is needed for the practical implementation of the RVI algorithm since each iteration of RVI requires the computation of the value function for each state-action pair. 

{\highlight  The RVI algorithm essentially computes the optimal state value function through improving the estimates of state and state action values, $h(s)$ and $Q(s,a)$, respectively. The algorithm starts with a random initialization of $h_0(s)$, $\forall s$, and sets an arbitrary but fixed reference state $s^{ref}$.  It then continuously updates the $Q(s,a)$ and $h(s)$ values until convergence. A single iteration of the RVI algorithm is given as follows: }     
\begin{align}
Q_{n+1}(s,a) &\leftarrow\Delta+\eta \cdot \mathbbm{1}[a^{\pi} \neq \idle]+	\Exppi{h_n(s')|a},\\
 h_{n+1}(s) &\leftarrow \min_{a}(Q_{n+1}(s,a))-\min_{a}(Q_{n+1}(s^{ref},a)),
\end{align} 
where $Q_n(s,a)$ and $h_n(s)$ denote the state action value function and differential value function for iteration $n$, respectively.

\begin{algorithm}[h]
\begin{footnotesize}
\begin{algorithmic}[1]
 \renewcommand{\algorithmicrequire}{\textbf{Input:}}
 \renewcommand{\algorithmicensure}{\textbf{Output:}}
 \REQUIRE Lagrange parameter $\eta$, error probability  $g(r)$, $(\delta^{ref},r^{ref})$ \Comment{choose an arbitrary but fixed reference state}
  \STATE  $h_0^{N\times r_{max}}\leftarrow \mathbf{0}~$ \Comment{initialization}   
  \FOR {episodes $n= 0,1,2,\ldots$ } 
  \FOR {state $s\in \mathcal{S} $ }
  \FOR{action $a\in\mathcal{A}$}   
 \STATE	$Q_{n+1}(s,a)\leftarrow\Delta+\eta\cdot\mathbbm{1}[a^{\pi}\neq \idle]+	\Exp{h_n(s')}$\;  
\ENDFOR    
\STATE ${V}_{n+1}(\delta,r)\leftarrow\min_{a}(Q_{n+1}(\delta,r,a))$\;
\STATE $h_{n+1}(\delta,r)\leftarrow{V}_{n+1}(\delta,r)-{V}_{n+1}(\delta^{ref},r^{ref})$
  \ENDFOR
  \IF{$|h_{n+1}-h_n|\leq \epsilon$} \Comment{check the convergence}
  \FOR{$s \in \mathcal{S}$}    \Comment{compute the optimal policy}
\STATE	 $\pi^*_{\eta}(s)\leftarrow\argmin_{a}(Q(s,a))$ 

    \ENDFOR
    \STATE \textbf{Return} $\pi^*$ 
  \ENDIF

  \ENDFOR
 \end{algorithmic} 
  \caption{Relative Value Iteration (RVI)}
  \label{alg:RVI}
  \end{footnotesize}
\end{algorithm}

After presenting an algorithm that can compute the optimal deterministic policy $\pi^*_\eta$ for any given $\eta$ (more precisely, an arbitrarily close approximation thereof for the finite-state MDP), we need to find the particular Lagrange multiplier $\eta^*$ as defined by \eqref{eq:eta_star}. A heuristic method to find a single $\eta$ value with $C_{\eta}\approx\lambda$ is as follows: We start with an initial parameter $\eta^0$, and run an iterative algorithm updating $\eta$ as $\eta^{m+1} = \eta^m+\alpha_m (C_{\eta^m}-\lambda)$ for a step size parameter $\alpha_m$\footnote{$\alpha_m$ is a positive decreasing sequence and satisfies the following conditions: $\sum_m \alpha_m = \infty$ and $\sum_m \alpha^2_m < \infty$ from the theory of stochastic approximation \cite{stochastic_approx}.}. We continue this iteration until $|C_{\eta^{m}}-\lambda|$ becomes smaller than a given threshold, and denote the resulting value by ${\eta}^*$. A more detailed discussion on an iterative algorithm minimizing AoI is also given in~\cite{wcnc_paper,journal_paper}. 



\section{AoI with Standard ARQ Protocol}
\label{sec:arq}


In this section, we assume that the system adopts the standard ARQ protocol and the state space reduces to  $(\delta^{rx}_1, \delta^{rx}_2, \ldots, \delta^{rx}_M)$ as $r_{j,t}=0,~\forall j,t$, and the action space to $\mathcal{A}=\{\idle , \new_1, \ldots, \new_M\}$. The probability of error of each status update is $p_j\triangleq g_j(0)$ for user $j$. State transitions in \eqref{eq:transitions}, the Bellman optimality equations, and the RVI algorithm can all be simplified accordingly.  Thanks to these simplifications, we are able to derive a low-complexity policy based on Whittle's approach \cite{whittle1988}  by modelling the problem as a RMAB \cite{Gittins_book}. 
 Although the RVI algorithm presented in Section~\ref{sec:solution} provides an optimal solution to Problem~\ref{problem}, its computational complexity is significant for large networks consisting of many users. The WI policy in Section~\ref{sec:WI_arq}, on the other hand, results in a possibly suboptimal yet computationally efficient policy, which often performs very well in practice. We also derive a closed-form lower bound for the constrained MDP in Section~\ref{sec:bound}.




\subsection{WI Policy}
\label{sec:WI_arq}


Multi-armed bandits (MAB) \cite{Gittins_book} constitute  a class of RL problems with a single state. In the restless MAB (RMAB) problem \cite{whittle1988}, each arm is associated with a state that evolves over time, and the reward distribution of the arm depends on its state (in contrast, in the basic stochastic MAB problems, rewards are i.i.d.).
The multi-user AoI minimization problem with ARQ can be formulated as a RMAB with $M+1$ arms: choosing arm $j$ is associated with transmitting to user $j$, while arm $M+1$ represents the action of staying idle ($a=\idle$). RMAB problems are known to be PSPACE-hard in general \cite{Gittins_book}; however, a low-complexity heuristic policy  can be found for certain problems by relaxing the constraint that in every round only a single arm can be selected, and instead introducing a bound on the expected number of arms chosen~\cite{whittle1988}. The resulting policy, known as the WI policy, is a sub-optimal policy, but it is known to perform close to optimal in many settings \cite{whittle1988}. 

Following Whittle's approach, we decouple our problem  into $M$ sub-problems each corresponding to a single user, and treat these problems independently. The cost of transmitting to a user (called \textit{subsidy for passivity} \cite{whittle1988}) is denoted by $C$, which will be later used to derive the index policy. Writing the Bellman equation  \eqref{eq:Bellman2} for each subproblem, we obtain the optimality equations for the single user AoI minimization problem with the standard ARQ protocol where the action space is $\{\idle,\new_j\}$ 
\begin{equation}
h_C(\delta^{rx}_j)+L_j^*=\min \big\{Q(\delta^{rx}_j,\new_j),Q(\delta^{rx}_j,\idle)\}, \label{eq:BellmanC}
\end{equation} and the optimal policy to each subproblem is given
\begin{equation}
\pi_{C}^*(\delta^{rx}_j) \in \argmin_{a\in\{\idle,\new_j\}} \big\{Q(\delta^{rx}_j,a)\big\}, \textrm{ where}
\label{eq:BellmanC2}
\end{equation}
\vspace{-0.3in}
\begin{align*}
&Q(\delta^{rx}_j,\new_j)\triangleq w_j\delta^{rx}_j+ C + p_j h_{C}(\delta^{rx}_j+1) +(1-p_j) h_{C}(1), 
&Q(\delta^{rx}_j,\idle)\triangleq w_j\delta^{rx}_j+ h_{C}(\delta^{rx}_j+1).
\end{align*}


Given \eqref{eq:BellmanC} and \eqref{eq:BellmanC2}, let $S_j^{\new_j}(C)$ represent the set of states the optimal action is equal to $\new_j$ for a given $C$, that is, $S_j^{\new_j}(C)=\{s: \pi_{C}^*(\delta^{rx}_j)=\new_j\}$. Then, we define indexability as follows.

\begin{definition} An arm is indexable if the set $S_j^{\new_j}(C)$ as a function of $C$ is monotonically decreasing for $C\in \mathbb{R}$, and $ \underset{C\rightarrow \infty}\lim S_j^{\new_j}(C) =  \varnothing $  and $\underset{C\rightarrow -\infty}\lim S_j^{\new_j}(C) =  \mathcal{S}$ \cite{whittle1988,Gittins_book}. The problem is indexable if every arm is indexable.
\label{def:index}
\end{definition}


Note that if a problem is indexable as defined in Definition~\ref{def:index}, $S_j^a(C_1)\subset S_j^a(C_2)$ for $C_1\geq C_2$, and there exists a $C$ such that both actions are \emph{equally desirable}, that is, $Q(\delta^{rx}_j,\idle)=Q(\delta^{rx}_j,\new_j)$ for all $\delta^{rx}_j$. The WI for our problem is defined as follows.

\begin{definition}
\textit{The WI  for user $j$ at state $\delta^{rx}_j$,  denoted by $I_j(\delta^{rx}_j)$, is defined as the cost $C$ that makes both actions $\new_j$ and $\idle$ equally desirable.} 
\label{def:whittle}
\end{definition}
\vspace{-0.1in}
Next, we derive the WI for our problem: 
\vspace{-0.1in}
\begin{proposition}
Problem~\ref{problem} with standard ARQ is indexable and the WI for each user $j$ and state $\delta^{rx}_j$ can be computed as
\vspace{-0.1in}
\begin{equation}
I_j(\delta^{rx}_j)= \frac{1}{2} w_j \delta^{rx}_j (1-p_j)\left(\delta^{rx}_j+\frac{1+p_j}{1-p_j}\right), ~\forall j \in [M], \label{eq:WI}
\end{equation}
where the WI for the idle action is $I_{M+1}=\eta$. 
\label{prop_WI}
\end{proposition}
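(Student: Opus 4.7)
The plan is to establish indexability and derive~\eqref{eq:WI} by reducing each single-user subproblem to a renewal process and applying an indifference condition between adjacent threshold policies. First, I would show that for every $C\in\mathbb{R}$ the $C$-optimal policy of~\eqref{eq:BellmanC2} is of threshold type, i.e.\ there exists $n(C)\in\{1,\ldots,N+1\}$ with $\pi_C^*(\delta)=\new_j$ iff $\delta\ge n(C)$. Writing
\[
Q(\delta,\new_j)-Q(\delta,\idle)=C-(1-p_j)\bigl[h_C(\delta+1)-h_C(1)\bigr],
\]
this reduces to checking that $h_C(\delta)$ is non-decreasing in $\delta$, a standard monotonicity property for AoI problems that is preserved by the RVI operator (Algorithm~\ref{alg:RVI}) and passes to its fixed point.

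Second, under threshold $n$, each renewal cycle (between two successful decodings) contains $n-1$ idle slots with ages $1,\ldots,n-1$, followed by a geometric number of transmission slots with ages $n,n+1,\ldots$, each succeeding with probability $1-p_j$. Setting $\beta=1/(1-p_j)$, the renewal-reward theorem yields $\tau(n)=n-1+\beta$, transmission rate $T(n)=\beta/\tau(n)$, and average cost $L(n)=w_j\Psi(n)/\tau(n)$ with $\Psi(n)=\tfrac{n(n-1)}{2}+n\beta+p_j\beta^2$. By Definition~\ref{def:whittle}, $I_j(n)$ is the value of $C$ making thresholds $n$ and $n+1$ equally optimal, i.e.\ the solution of $L(n)+C\,T(n)=L(n+1)+C\,T(n+1)$, so $I_j(n)=[L(n+1)-L(n)]/[T(n)-T(n+1)]$. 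The key simplification is the identity $\Psi(n+1)-\Psi(n)=n+\beta$, which makes $\Psi(n+1)\tau(n)-\Psi(n)\tau(n+1)$ collapse to $\tfrac{n}{2}(n-1+2\beta)$; combined with $T(n)-T(n+1)=\beta/[\tau(n)\tau(n+1)]$, the factor $\tau(n)\tau(n+1)$ cancels and one obtains $I_j(n)=\tfrac{w_j n}{2\beta}(n-1+2\beta)$, which equals~\eqref{eq:WI}.

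Finally, indexability follows by inspection: expanding gives $I_j(\delta)=\tfrac{w_j}{2}\bigl[(1-p_j)\delta^2+(1+p_j)\delta\bigr]$, which is strictly increasing in $\delta$ since $1-p_j>0$. Hence $S_j^{\new_j}(C)=\{\delta:I_j(\delta)\ge C\}$ is monotonically decreasing in $C$, empties as $C\to+\infty$, and covers $[N]$ as $C\to-\infty$, confirming Definition~\ref{def:index}; the index $I_{M+1}=\eta$ for the idle arm is immediate from the Lagrangian~\eqref{eq:lag}, since any non-idle action incurs the extra cost $\eta$ while $\idle$ does not. The main obstacle is the algebraic bookkeeping in the indifference step---the threshold structure is routine and indexability is a by-product of the closed form---so the proof really hinges on spotting the telescoping identity $\Psi(n+1)-\Psi(n)=n+\beta$ that reduces the index to a compact quadratic.
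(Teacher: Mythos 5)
Your proposal follows essentially the same route as the paper's proof: decouple into single-user subproblems, establish the threshold structure of the $C$-optimal policy, express the average AoI and transmission rate of a threshold policy in closed form (your renewal-reward quantities $\tau(n)$, $T(n)$, $\Psi(n)$ reproduce exactly the paper's expression \eqref{eq:JetaARQ_m}, which it imports from \cite{journal_paper}), and obtain the index as the subsidy at which adjacent thresholds $\delta$ and $\delta+1$ are indifferent; your algebra checks out and recovers \eqref{eq:WI}. The only cosmetic differences are that you prove the threshold structure via monotonicity of $h_C$ rather than citing Lemma~1 of \cite{journal_paper}, and you deduce indexability from the monotonicity of $I_j(\delta)$ in $\delta$ rather than from the paper's explicit formula for the optimal threshold $\Gamma_j^*(C)$ --- both are equivalent in substance.
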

\vspace{-0.2in}
\begin{proof}
The proof is given in Appendix~\ref{append_WI}.
\end{proof}


 The WI policy  is defined as follows: in state $(\delta^{rx}_1,\delta^{rx}_2,\ldots,\delta^{rx}_M)$, compare the highest index with the Lagrange parameter $\eta$, and if $\eta$ is smaller,  then the source transmits to the user with the highest index, otherwise the source remains idle. The WI policy, defined below, tends to transmit to the user with a high weight ($w_j$), low  error probability ($p_j$) and high AoI ($\delta^{rx}_j$). Formally, 
 \vspace{-0.1in}
\begin{align}
\pi(\delta^{rx}_1,\delta^{rx}_2,\ldots,\delta^{rx}_M) = \begin{cases}
\new_{\underset{j}\argmax(I_j(\delta^{rx}_j))} & \textrm{ if } \max_j I_j(\delta^{rx}_j) \geq \eta , \vspace{-0.1in} \\
\idle & \textrm{ otherwise.}
\end{cases}   
\end{align}


  The effectiveness of the WI policy is demonstrated in Section~\ref{sec:results}. The  WI policy, which corresponds to a {\highlight suboptimal} policy, can easily be shown to be optimal for our problem with standard ARQ if all the users are identical, i.e, $p_j=p$ and $w_j=w$, $\forall j$, and $\lambda=1$. 
  

\subsection{Lower Bound on the Average AoI under a Resource Constraint}
\label{sec:bound}

In this section, we derive a closed-form lower bound for the constrained MDP:

\begin{theorem}
For Problem~\ref{problem} with the standard ARQ protocol, we have $J_{LB}\leq J^{\pi}$, $\forall \pi \in \Pi$, where
\begin{align*}
J_{LB}=\frac{1}{2\lambda}{\left(\sum_{j=1}^M {\sqrt{\frac{w_j}{1-p_j}}}\right)}^2+\frac{\lambda w_{j^*}p_{j^*}}{2(1-p_{j^*})}+\frac{1}{2}\sum_{j=1}^M w_j, 
\textnormal{and }  j^*\triangleq \argmin_j{\frac{w_jp_j}{2(1-p_j)}} . \nonumber
\end{align*}
\label{thm:bound}
\end{theorem}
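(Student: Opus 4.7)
The plan is to establish a per-user lower bound on $\bar{\delta}_j$ parameterized by $\lambda_j$---the long-run fraction of slots in which a given feasible policy transmits to user $j$ (so that $\sum_j \lambda_j \le \lambda$)---and then combine these bounds via Cauchy--Schwarz and a short convexity estimate. Using the renewal-reward identity on the sawtooth AoI process of user $j$, I would write $\bar{\delta}_j = \mu_j \mathbb{E}[Y_j^2]/2 + 1/2$, where $\mu_j = \lambda_j(1-p_j)$ is the successful-delivery rate and $Y_j$ is the stationary inter-delivery time. I would then decompose $Y_j = \sum_{i=1}^{K_j} Z_{j,i}$, where $K_j \sim \operatorname{Geom}(1-p_j)$ is the number of transmissions to user $j$ per renewal cycle and the $Z_{j,i} \ge 1$ are the gaps between consecutive transmissions to user $j$.

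The crux is the single-user variance lower bound $\operatorname{Var}(Y_j) \ge \operatorname{Var}(K_j) = p_j/(1-p_j)^2$. Because the scheduler commits to each $Z_{j,i}$ causally, before the $i$-th attempt's Bernoulli outcome is revealed, conditioning on $\{K_j = k\}$ for any $k \ge i$ is equivalent to conditioning on $\{K_j \ge i\}$ for the distribution of $Z_{j,i}$. Consequently $g(k) := \mathbb{E}[Y_j \mid K_j = k] = \sum_{i=1}^{k} \bar{\mu}_i$ with $\bar{\mu}_i := \mathbb{E}[Z_{j,i} \mid K_j \ge i] \ge 1$, so $g(K_j) - K_j$ is a non-decreasing function of $K_j$. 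Chebyshev's sum inequality then gives $\operatorname{Cov}(K_j,\, g(K_j) - K_j) \ge 0$, hence $\operatorname{Var}(g(K_j)) \ge \operatorname{Var}(K_j)$, and the law of total variance delivers $\operatorname{Var}(Y_j) \ge \operatorname{Var}(K_j)$. Plugging into $\mathbb{E}[Y_j^2] = \operatorname{Var}(Y_j) + 1/\mu_j^2$ yields
\[
\bar{\delta}_j \;\ge\; \frac{1}{2\lambda_j(1-p_j)} + \frac{\lambda_j p_j}{2(1-p_j)} + \frac{1}{2}.
\]

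Setting $c_j := w_j p_j/(2(1-p_j))$, $A := \bigl(\sum_j \sqrt{w_j/(1-p_j)}\bigr)^2$, and $S := \sum_j \lambda_j \le \lambda$, the weighted sum of the single-user bound---together with Cauchy--Schwarz giving $\sum_j w_j/(\lambda_j(1-p_j)) \ge A/S$ and the trivial estimate $\sum_j \lambda_j c_j \ge S \min_j c_j$---produces $\sum_j w_j \bar{\delta}_j \ge f(S) + \tfrac{1}{2}\sum_j w_j$ with $f(S) = A/(2S) + S\min_j c_j$. The convex $f$ is minimized over $S > 0$ at $S^* = \sqrt{A/(2\min_j c_j)}$; using $A \ge w_{j^*}/(1-p_{j^*})$ and $\lambda^2 p_{j^*} \le 1$ (both following from $\lambda, p_{j^*} \in [0,1]$) one checks $S^* \ge \lambda$, so $f$ is decreasing on $(0,\lambda]$ and $f(S) \ge f(\lambda) = A/(2\lambda) + \lambda w_{j^*} p_{j^*}/(2(1-p_{j^*}))$. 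Adding $\tfrac12\sum_j w_j$ recovers exactly $J_{LB}$.

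The main obstacle is the variance inequality $\operatorname{Var}(Y_j) \ge \operatorname{Var}(K_j)$: an adaptive scheduler could \emph{a priori} correlate the intra-cycle gaps $Z_{j,i}$ with the cycle length $K_j$, threatening to cancel out the geometric variance. Causality rules this out by forcing the conditional mean function $g$ to have slope at least one, making $K_j$ and $g(K_j) - K_j$ comonotone so that Chebyshev's sum inequality closes the argument. The single-user bound is sharp, attained by a ``batch'' scheduler that transmits to user $j$ until success and then idles for a deterministic number of slots.
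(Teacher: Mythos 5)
Your proposal is correct, and it reaches the same per-user inequality $\bar{\delta}_j \ge \frac{1}{2\lambda_j(1-p_j)}+\frac{\lambda_j p_j}{2(1-p_j)}+\frac12$ and the same final optimization over $\{\lambda_j\}$ as the paper, but by a genuinely different route at the crucial step. The paper relaxes the single-transmission constraint to decouple the users exactly as you do, but then it obtains the per-user bound by invoking the threshold-optimality of the single-user policy (Theorem~2 of \cite{journal_paper}), writing the closed-form pair $J_j^{\Gamma_j}$, $C_j^{\Gamma_j}$ as functions of the threshold, eliminating $\Gamma_j$, and using monotonicity in $C_j\le\lambda_j$. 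You instead bound an \emph{arbitrary} causal policy directly via renewal--reward and the variance decomposition $\Exp{Y_j^2}=\Var{Y_j}+1/\mu_j^2$ with $\Var{Y_j}\ge\Var{K_j}=p_j/(1-p_j)^2$, the latter secured by the causality/comonotonicity argument (the conditional mean $g(k)=\Exp{Y_j\mid K_j=k}$ has increments at least one, so Chebyshev's sum inequality applies). This buys self-containment --- you never need the structural result about threshold policies --- and it isolates \emph{why} the geometric-variance term $\lambda_j p_j/(2(1-p_j))$ is unavoidable, with the tightness case (transmit-until-success then idle) matching the paper's threshold policies, for which $\Var{Y_j}=\Var{K_j}$ exactly. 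The paper's route is shorter given the imported lemma. Two points to tighten in yours: (a) in the multi-user system the delivery cycles of user $j$ under a general stationary policy are not i.i.d., so the identity $\bar{\delta}_j=\Exp{Y_j^2}/(2\Exp{Y_j})+\tfrac12$ and the variance bound should be stated in a Palm/ergodic (long-run Ces\`aro average over cycles) form --- the argument survives, since the per-cycle conditional claims hold given the past, but this deserves a sentence; (b) the age cap at $N$ only lowers $J^\pi$ relative to the uncapped sawtooth, which is the direction you need, but it should be noted. Your handling of $\sum_j\lambda_j\le\lambda$ via monotonicity of $f$ on $(0,\lambda]$ is in fact slightly more careful than the paper's, which imposes $\sum_j\lambda_j=\lambda$ from the outset.
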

\vspace{-0.4in}
\begin{proof}
The proof is provided in Appendix~\ref{append_LB1}.
\end{proof}



Previously, \cite{Kadota2018} proposed a lower bound on the average AoI for a source node sending time-sensitive information
to multiple users through unreliable channels, without any resource constraint (i.e. $\lambda=1$). The lower bound in Theorem~\ref{thm:bound} shows the effect of the constraint $\lambda$, and even for $\lambda=1$, it is tighter than the one provided in \cite{Kadota2018}.

\section{AoI with Fixed Redundancy (FR) HARQ Protocol}
\label{sec:harq1}

In this section, the FR HARQ protocol, also studied in \cite{Najm2017, Yates2017}, is investigated for a multi-user status update system under a resource constraint. We assume that a generated status update contains $k_s$ information symbols and encoding of a status update is performed using an $(n_s, k_s)$-MDS code~\cite{Najm2017, Yates2017}. The transmission of a status update continues until $n_s$ symbols are transmitted either successfully or not. The receiver starts decoding after $n_s$ symbols are transmitted and the AoI drops to $n_s$ if at least $k_s$ transmissions are successful, otherwise increases by $n_s$.  In this case, each information symbol is considered as a packet which is transmitted in one time slot; that is, the minimum age which could be achieved is equal to $n_s$. 

We note that HARQ schemes other than FR HARQ can also be studied, e.g., chase combining when the base station retransmits the same packet and the receiver aggregates the energy from the repeated transmissions to increase signal to noise ratio (SNR), or incremental
redundancy (IR) HARQ which transmits additional redundancy bits in each retransmission and constantly adapts coding rate
until a successful decoding \cite{hybrid2001}. In this paper, a general HARQ model is studied and can be adapted to both chase combining and incremental redundancy. The particular HARQ protocol, i.e. FR HARQ with MDS coding, is chosen due to simplicity of computation and tractability of error probabilities. FR HARQ can be adopted to the general HARQ error probabilities with $r_{max}=n_s-1$:  $g(r)=1$ $\forall r=\{0,\ldots,n_s-2\}$ and $g(n_s-1)=p^{FR}_{j}\triangleq \textrm{Pr}(\textrm{less than } k_s \textrm{ symbols are received among } n_s \textrm{ transmissions})= \sum_{k=0}^{k_s-1}{{n_s}\choose{k}}p_j^{n_s-k}(1-p_j)^{k}$, $\forall j\in[M]$.

The problem for FR HARQ can be formulated as a RMAB problem, and whenever an arm (user) is chosen for transmission, a new update is generated and an encoded packet is transmitted for $n_s$ time slots to that user. If the idle action is chosen, the source stays idle for a single time slot.


{\highlight 
Similarly to Section~\ref{sec:arq}, low complexity heuristics based on the WI and a lower bound on the average AoI are presented for FR HARQ protocol. }


\begin{proposition}
Problem~\ref{problem} with the FR HARQ Protocol is indexable and the WI for each user can be computed in closed form. We have
\vspace{-0.1in}
\begin{align}
I^{FR}_j(\delta^{rx}_j)&= \frac{w_j}{2\left(\frac{n_s}{1-p^{FR}_{j}}\right)}\left(\left(\delta^{rx}_j+\frac{n_sp^{FR}_{j}}{1-p^{FR}_{j}}\right)^2+\delta^{rx}_j + \frac{n_sp^{FR}_{j}}{1-p^{FR}_{j}}-\frac{n_s^2p^{FR}_{j}}{(1-p^{FR}_{j})^2}\right), 
\end{align}
where
\begin{align}
p^{FR}_{j}&\triangleq \textrm{Pr}(\textrm{less than } k_s \textrm{ symbols are received})= \sum_{k=0}^{k_s-1}{{n_s}\choose{k}}p_j^{n_s-k}(1-p_j)^{k}. \label{eq:FR}
\end{align}
\label{proof_WI_HARQ}
\end{proposition}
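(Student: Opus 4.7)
The plan is to follow the same template used for Proposition~\ref{prop_WI} in Appendix~\ref{append_WI}: decouple the RMAB into single-user subproblems, identify the threshold structure of the $C$-subsidized optimal policy, compute the average AoI and the activation frequency of a threshold policy in closed form, and finally solve for the subsidy $C$ that makes the two actions equally desirable at the threshold state.

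First, I would fix a user $j$ and consider the single-arm subproblem in which taking the passive action ($\idle$) earns a subsidy $C$ per slot. Because transmitting a FR HARQ packet commits the source for $n_s$ consecutive slots, I would model the subproblem as a semi-Markov decision process in which the two primitive decisions in state $\delta^{rx}_j$ are: (i) stay idle for one slot, incurring cost $w_j\delta^{rx}_j-C$ and moving to $\delta^{rx}_j+1$; or (ii) commit to a full FR HARQ round of length $n_s$, incurring the cumulative cost $\sum_{k=0}^{n_s-1} w_j(\delta^{rx}_j+k)$, landing in $n_s$ with probability $1-p^{FR}_j$ and in $\delta^{rx}_j+n_s$ with probability $p^{FR}_j$. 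The Bellman equation then reads
\begin{equation*}
h_C(\delta^{rx}_j)+n\cdot L_j^* = \min\bigl\{Q(\delta^{rx}_j,\idle),\, Q(\delta^{rx}_j,\new_j)\bigr\},
\end{equation*}
where $n\in\{1,n_s\}$ is the duration of the chosen action and the two $Q$-values are as just described.

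Next I would establish that the $C$-optimal policy has a threshold form: there exists $\tau_j(C)\in[N]$ such that the transmit action is preferred precisely when $\delta^{rx}_j\geq\tau_j(C)$. This follows by a standard monotonicity argument, since the per-slot immediate gain of transmitting over idling grows with $\delta^{rx}_j$ while the post-action value difference is bounded by $p^{FR}_j$-contractive terms. Under such a threshold policy, each renewal cycle of the induced Markov chain consists of $\tau_j-1$ idle slots followed by a geometric number of FR HARQ rounds, each lasting $n_s$ slots and failing independently with probability $p^{FR}_j$. From this renewal structure I can write the expected length of a cycle as $(\tau_j-1)+n_s/(1-p^{FR}_j)$ and the expected AoI accumulated over the cycle in closed form (an arithmetic sum over the idle portion plus an arithmetic-geometric sum over the retransmission bursts). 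Dividing the two gives the long-run average AoI as an explicit function of $\tau_j$, and the average activation frequency is $(n_s/(1-p^{FR}_j))$ divided by the cycle length.

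The Whittle index at state $\delta^{rx}_j$ is then obtained by setting $\tau_j=\delta^{rx}_j$ and solving $Q(\delta^{rx}_j,\idle)=Q(\delta^{rx}_j,\new_j)$ for $C$. Using the renewal-reward expressions above, this reduces to an algebraic identity in $\delta^{rx}_j$, $n_s$ and $p^{FR}_j$, which I expect to simplify to the closed form stated in the proposition. Finally, to verify indexability per Definition~\ref{def:index}, I would show that the threshold $\tau_j(C)$ is non-decreasing in $C$ (so $S_j^{\new_j}(C)$ shrinks monotonically), with $\tau_j(C)\to 1$ as $C\to-\infty$ and $\tau_j(C)\to\infty$ as $C\to+\infty$; equivalently, I would argue directly that the derived closed-form $I^{FR}_j(\delta^{rx}_j)$ is strictly increasing in $\delta^{rx}_j$, which by the threshold property is sufficient for indexability.

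The main obstacle I anticipate is the bookkeeping around the $n_s$-slot duration of the active action: average-cost optimality equations assume unit-time transitions, so I must either rescale time (treating a FR HARQ round as one super-step with cost equal to the sum of the $n_s$ per-slot costs) or pass through the semi-Markov dynamic programming formulation. Either route is standard but requires careful accounting in the arithmetic-geometric series so that the final index matches $I^{FR}_j(\delta^{rx}_j)$; getting that algebra right, and in particular the cross-terms that produce the $n_s p^{FR}_j/(1-p^{FR}_j)$ offset and the $-n_s^2 p^{FR}_j/(1-p^{FR}_j)^2$ correction, is the delicate computational step.
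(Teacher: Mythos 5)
Your plan follows essentially the same route as the paper's proof in Appendix~\ref{AppendixI}: decouple into single-user subsidized subproblems with the $n_s$-slot active action folded into the optimality equations, establish a threshold structure, compute the average cost of a threshold policy by a renewal argument, and solve for the subsidy $C$ at which thresholds $\delta^{rx}_j$ and $\delta^{rx}_j+1$ are indifferent. Two points need repair before the computation goes through. First, your renewal-cycle bookkeeping is off: after a successful FR HARQ round the age resets to $n_s$, not to $1$, so the idle portion of a cycle lasts $\tau_j-n_s$ slots, and the expected cycle length is $(\tau_j-n_s)+n_s/(1-p^{FR}_{j})$ rather than $(\tau_j-1)+n_s/(1-p^{FR}_{j})$; the paper writes the renewal period as $X=\Gamma_j-n_s+S_j$ with $\Exp{S_j}=n_s/(1-p^{FR}_{j})$ and $\Exp{S_j^2}=n_s^2(1+p^{FR}_{j})/(1-p^{FR}_{j})^2$. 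Carrying $\tau_j-1$ through the arithmetic--geometric sums would not reproduce the stated index, in particular the $n_sp^{FR}_{j}/(1-p^{FR}_{j})$ offset. Second, the threshold structure is not dispatched by the one-line monotonicity remark you give: the paper proves it by verifying submodularity of $Q(\cdot,0,a)$ in $(\delta^{rx}_j,a)$, which reduces to showing that the differential value function $h_C(\delta,0)$ is non-decreasing in $\delta$, and that monotonicity is in turn established by a coupling argument (apply the optimal action sequence of the higher-age trajectory to both systems under identical channel realizations and compare accumulated costs via the Bellman equation). Your sketch should be expanded along those lines; with these two fixes the proposal coincides with the paper's argument.
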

\vspace{-0.45in}
\begin{proof}
The proof and the derivation of WI is given in Appendix~\ref{AppendixI}.
\end{proof}
{\highlight  Following the WI policy presented in Proposition 2, the source tends to transmit to a user more frequently as the age, the weight, and the error probability of the user increases.}

\begin{theorem}
For Problem~\ref{problem} with the FR HARQ Protocol, we have $J_{LB}\leq J^{\pi}$, $\forall \pi \in \Pi$, where
\begin{align*}
J_{LB}=\frac{n_s}{2\lambda}{\left(\sum_{j=1}^M {\sqrt{\frac{w_j}{1-p^{FR}_{j}}}}\right)}^2+\frac{\lambda n_sw_{j^*}p^{FR}_{j^*}}{2(1-p^{FR}_{j^*})}+\sum_{j=1}^M {w_j\left(n_s-\frac{1}{2}\right)}, \textnormal{and }  j^*\triangleq \argmin_j{\frac{w_jp^{FR}_j}{(1-p^{FR}_j)}}. \nonumber
\end{align*}
\label{thm_FIR}
\end{theorem}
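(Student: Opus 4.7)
My plan is to follow the strategy used in the proof of Theorem~\ref{thm:bound} (standard ARQ), adapted to the FR HARQ protocol. The differences introduced by FR HARQ are: each transmission attempt occupies $n_s$ consecutive time slots rather than $1$; a successful reception resets the AoI to $n_s$ rather than $1$; and the per-attempt failure probability for user $j$ is $p^{FR}_j$ (defined in \eqref{eq:FR}) rather than $p_j$.

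First, for any feasible policy $\pi$, I would let $x_j \in [0,1]$ denote the long-run fraction of time slots in which the source transmits to user $j$, so that $\sum_{j=1}^M x_j = C^{\pi} \le \lambda$. It suffices to treat policies with $\sum_j x_j = \lambda$: any idle slot can be converted into a transmission without increasing any component of the weighted AoI, so the infimum of $J^\pi$ over feasible policies is attained on the boundary $\sum_j x_j = \lambda$. Each attempt to user $j$ occupies $n_s$ slots and succeeds independently with probability $1-p^{FR}_j$, so the long-run success rate is at most $x_j(1-p^{FR}_j)/n_s$; hence the inter-success time $Y_j$ satisfies $\Exp{Y_j} \ge n_s/(x_j(1-p^{FR}_j))$.

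A standard renewal-reward computation, using that the AoI resets to $n_s$ immediately after a successful reception and grows by one per slot thereafter, yields
\begin{equation*}
\bar{\delta}_j \;=\; n_s - \frac{1}{2} + \frac{\Exp{Y_j^2}}{2\,\Exp{Y_j}},
\end{equation*}
where $\bar{\delta}_j$ denotes the long-run average AoI at user $j$ under $\pi$. I would then decompose $\Exp{Y_j^2}/\Exp{Y_j} = \Exp{Y_j} + \Var{Y_j}/\Exp{Y_j}$ and lower-bound the two summands separately: the first via the mean bound above, and the second via the same argument as in the proof of Theorem~\ref{thm:bound} — since per-attempt outcomes are independent Bernoulli$(1-p^{FR}_j)$ trials, the number of attempts between successive successes is geometric with parameter $1-p^{FR}_j$, which, combined with the $n_s$-slot length of each attempt and the rate constraint, forces $\Var{Y_j}/\Exp{Y_j} \ge n_s p^{FR}_j x_j/(1-p^{FR}_j)$. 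Combining produces the per-user bound
\begin{equation*}
\bar{\delta}_j \;\ge\; n_s - \frac{1}{2} + \frac{n_s}{2\,x_j(1-p^{FR}_j)} + \frac{n_s\,p^{FR}_j\,x_j}{2(1-p^{FR}_j)}.
\end{equation*}

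Finally I would form $\sum_j w_j \bar{\delta}_j$ and lower-bound each of the three contributions separately over the simplex $\sum_j x_j = \lambda$. The constant piece gives $\sum_j w_j(n_s - \frac{1}{2})$. The $1/x_j$-piece is bounded via the Cauchy--Schwarz inequality $\sum_j a_j^2/x_j \ge (\sum_j a_j)^2/\sum_j x_j$ with $a_j^2 = w_j n_s/(2(1-p^{FR}_j))$, yielding $\frac{n_s}{2\lambda}(\sum_j \sqrt{w_j/(1-p^{FR}_j)})^2$. The linear $x_j$-piece satisfies $\sum_j c_j x_j \ge (\min_j c_j)\sum_j x_j = \lambda \min_j c_j$ with $c_j = w_j n_s p^{FR}_j/(2(1-p^{FR}_j))$, yielding $\frac{\lambda\,n_s\,w_{j^*}\,p^{FR}_{j^*}}{2(1-p^{FR}_{j^*})}$ where $j^* = \argmin_j\{w_j p^{FR}_j/(1-p^{FR}_j)\}$. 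Summing the three contributions reproduces $J_{LB}$. The hardest step will be the per-user variance lower bound for arbitrary, history-dependent, adaptive scheduling policies: one must show that however the source interleaves attempts across users and reacts to ACK/NACK feedback, the independence of the $n_s$-slot attempt outcomes enforces the claimed scaling of $\Var{Y_j}/\Exp{Y_j}$ with $p^{FR}_j$ and $x_j$. This is the direct FR HARQ analog of the corresponding ingredient in the proof of Theorem~\ref{thm:bound}; once imported, the remainder reduces to standard renewal theory plus the elementary Cauchy--Schwarz and simplex minimization above.
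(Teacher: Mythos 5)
Your overall skeleton --- decouple the users with per-user transmission budgets summing to $\lambda$, establish a per-user bound of the form $n_s-\tfrac12+\tfrac{n_s}{2x_j(1-p^{FR}_j)}+\tfrac{n_s p^{FR}_j x_j}{2(1-p^{FR}_j)}$, then optimize over the simplex using Cauchy--Schwarz for the $1/x_j$ terms and a $\min$ for the linear terms --- matches Appendix~\ref{AppendixJ}, and your final optimization step is exactly the one in the paper. The genuine gap is in how you obtain the per-user bound. You propose to prove it for an \emph{arbitrary} adaptive policy via a renewal-reward identity together with a direct lower bound on $\Var{Y_j}/\Exp{Y_j}$, and you plan to ``import'' that variance bound from the proof of Theorem~\ref{thm:bound}. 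But the paper's proof of Theorem~\ref{thm:bound} contains no such ingredient, so there is nothing to import. Both Appendix~\ref{append_LB1} and Appendix~\ref{AppendixJ} instead first invoke a structural result: the optimal policy for the decoupled single-user subproblem is a \emph{threshold} policy (Theorem~2 of \cite{journal_paper} for ARQ; the lemma proved in Appendix~\ref{AppendixI} for FR HARQ). For a threshold policy the cycles are genuinely i.i.d., so $J_j^{\Gamma_j}$ and $C_j^{\Gamma_j}$ are computed exactly in \eqref{eq:FR_age}--\eqref{eq:FR_transmission}; eliminating $\Gamma_j$ gives $J_j^{\Gamma_j}=\frac{\Exp{S_j}}{2C_j^{\Gamma_j}}+C_j^{\Gamma_j}\frac{\Var{S_j}}{2\Exp{S_j}}+n_s-\frac12$, which is monotone on the feasible range $C_j^{\Gamma_j}\le\lambda_j\le 1$, so the budget can be substituted. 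The $\lambda$-dependent second term of $J_{LB}$ thus comes from $\Var{S_j}$, the variance of the \emph{service} time under persistent retransmission, not from a variance bound on the inter-success times of an arbitrary policy.

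Your route is not unsalvageable, but the step you flag as hardest --- $\Var{Y_j}/\Exp{Y_j}\ge n_s p^{FR}_j x_j/(1-p^{FR}_j)$ for arbitrary history-dependent scheduling --- is precisely where the work lies, and it is not ``standard renewal theory'': for a general policy the cycles $Y_j$ are neither independent nor identically distributed, so even the identity $\bar{\delta}_j=n_s-\tfrac12+\Exp{Y_j^2}/(2\Exp{Y_j})$ needs an ergodic/empirical-average justification, and the policy can shape $Y_j=\sum_{i=1}^{G}(n_s+w_i)$ by adaptively choosing the gaps $w_i$ between successive attempts based on observed NACKs. One could try to recover the bound by writing $Y_j=n_sG+h(G)$ with $h$ non-decreasing and arguing $\mathrm{Cov}(G,h(G))\ge 0$ so that $\Var{Y_j}\ge n_s^2\Var{G}$, but none of this appears in the paper, and without it your argument only delivers the weaker, \cite{Kadota2018}-style bound $\bar{\delta}_j\ge n_s-\tfrac12+\Exp{Y_j}/2$, which loses the second term of $J_{LB}$ entirely. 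The intended proof sidesteps all of this via the threshold-optimality lemma of Appendix~\ref{AppendixI}; you should either adopt that route or supply a complete proof of the variance inequality for adaptive policies.
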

\vspace{-0.45in}
\begin{proof}
The proof is provided in Appendix~\ref{AppendixJ}.
\end{proof}

 {\highlight  For any given network with ($w_j$, $p_j$, $\forall j$, and $\lambda$) and FR HARQ ($n_s,k_s$) protocol, the average AoI that can be obtained under any casual policy is higher than the closed-form lower bound provided in Theorem  4. The  expression in  Theorem 4 provides an intuition on  how  the  weights  ($w_j$), the error probabilities ($p_j$), the average transmission constraint ($\lambda$), the number of users ($M$) and the design of MDS coding ($n_s$,$k_s$) affect the performance of the system in terms of average AoI.}

{\highlight  Note that the results obtained for FR HARQ are identical to the ones obtained for standard ARQ protocol when ($n_s,k_s$) = $(1,1)$. If ($n_s,k_s$) is different than $(1,1)$, the average AoI result of Theorem 4 is equivalent to that of Theorem 3 scaled by $n_s$, where $p_j$s are replaced by $p^{FR}_j$ defined in \eqref{eq:FR}. }

\section{Learning in an unknown environment}
\label{sec:learning}

In sections~\ref{sec:solution}-\ref{sec:harq1}, it is assumed that the channel statistics change very slowly and the same transmission environment has been used for a long time before the time of deployment, i.e., the statistics regarding the error probabilities are available. In most practical wireless settings, however, the channel error probabilities for retransmissions may not be known at the time of deployment, or may change over time. We employ online learning algorithms to learn the error probabilities over time without degrading the performance significantly. In our previous work \cite{wcnc_paper,infocom_paper,journal_paper}, we proposed a simple \textit{average-cost SARSA} algorithm to minimize the average AoI for a single user. Due to the large state space of the multi-user network considered in this paper, different learning algorithms are considered.




\subsection{UCRL2 with HARQ}

The upper confidence RL (UCRL2) algorithm \cite{UCRL2} is a well-known RL algorithm for finite state and action MDP problems, with strong theoretical performance guarantees. However, the computational complexity of the algorithm scales quadratically with the size of the state space, which makes it unsuitable for large state spaces. UCRL2 has been initially proposed for generic MDPs with unknown rewards and transition probabilities; which need to be learned for each state-action pair. For the average AoI problem, the rewards are known (i.e., AoIs) while the transition probabilities are unknown. Moreover, the number of parameters to be learned can be reduced to the number of transmission error probabilities to each user; thus, the computational complexity can be reduced significantly. 

For a generic tabular MDP, UCRL2 keeps track of the possible MDP models (transition probabilities and expected immediate rewards) in a high-probability sense and finds a policy that has the best performance in the best possible MDP. To achieve this in our case, it is enough to optimistically estimate the error probabilities $g_j(r)$, and find a policy that is optimal for the resulting optimistic MDP. This is possible since the performance corresponding to a fixed sequence of transmission decisions improves if the error probabilities decrease. The average transmission constraint at the source requires additional modifications to UCRL2. We will guarantee this constraint by updating the Lagrange multiplier according to the empirical resource consumption. The details of the algorithm are given in Algorithm~\ref{alg:UCRL_HARQ}. 

UCRL2 exploits the optimistic MDP characterized by the optimistic estimation of error probabilities within a certain confidence interval, where $\hat{g}_j(r)$ and $\tilde{g}_j(r)$ represent the empirical and the optimistic estimates of the error probability for user $j$ after $r$ retransmissions.  In each episode, we keep track of a value $\eta$ resulting in a transmission cost close to $\lambda$, and then find and apply a policy that is optimal for the optimistic MDP (i.e., the MDP with the smallest total cost from among all plausible ones given the observations so far) with Lagrangian cost. 
In contrast to the original UCRL2 algorithm, finding the optimistic MDP in our case is easy (choosing lower estimates of the error probabilities), and we can use standard value iteration (VI) to compute the optimal policy (instead of the much more complex extended VI used in UCRL2). Thus, the computational complexity, which is the main drawback of UCRL2 algorithm, reduces significantly for the average AoI problem. UCRL2 is employed for Problem~\ref{problem} in this paper since it is an online algorithm (i.e., it does not need any previous training) and it enjoys strong theoretical guarantees for $\lambda =1$. The resulting algorithm will be called UCRL2-VI. 

\begin{algorithm}[h]
\begin{footnotesize}
\begin{algorithmic}[1]
 \renewcommand{\algorithmicrequire}{\textbf{Input:}}
 \renewcommand{\algorithmicensure}{\textbf{Output:}}
 \REQUIRE A confidence parameter $\rho\in (0,1)$, an update parameter $\alpha$, $\lambda$, confidence bound constant $U$, $|\mathcal{S}|$, $|\mathcal{A}|$
  \STATE $\eta=0$, $t=1$ and observe the initial state $s_1$. 
  \FOR {episodes $k= 1,2,\ldots$ } ~set $t_k\triangleq t$.
  \FOR {$j \in[M]$, $r\in [r_{max}]$}
  \STATE $N_k(j,r)\triangleq\vert\{\tau<t_k:a_{\tau}=\retx_j,r_{j,\tau}=r\}\vert$,
  ~ $N_k(j,0)\triangleq\vert\{\tau<t_k:a_{\tau}=\new_j\}\vert$.
  \STATE $E_k(j,r)\triangleq\vert\{\tau<t_k:a_{\tau}=\retx_j, r_{j,\tau}=r, NACK\}\vert$,
   $E_k(j,0)\triangleq\vert\{\tau<t_k:a_{\tau}=\new_j,  NACK\}\vert$.
  \STATE $\hat{g}_j(r)\triangleq \frac{E_k(j,r)}{\max\{N_k(j,r),1\}}$.
  \ENDFOR
  \STATE $C_k\triangleq\vert\{\tau<t_k:a_{\tau}\neq \idle\}\vert$.
  \STATE $\eta \leftarrow \eta+\alpha (C_k/t_k-\lambda)$. 
  \STATE Compute optimistic error probability estimates:
   $\qquad \tilde{g}_j(r)\triangleq\max\left\{0,\hat{g}_j(r)-\sqrt{\frac{U\log(|\mathcal{S}| |\mathcal{A}| t_k/\rho)}{max\{1,N_k(j,r)\}}}\right\}$.
  \STATE Use $\tilde{g}_j(r)$ and VI to find a policy $\tilde{\pi}_k$. 
  \STATE Set $v_k(j,r) \leftarrow 0$, $\forall j,r$.  
   \WHILE{$v_k(j,r)<N_k(j,r)$} \Comment{run policy $\tilde{\pi}_k$} 
\STATE Choose an action $a_t=\tilde{\pi}_k(s_t)$, and if $a_t \neq \idle$, set $j_t$ the target user, otherwise set $j_t=0$.
 \STATE Obtain cost $\sum_{j=1}^M w_j\delta^{rx}_j +\eta\mathbbm{1}[a_t\neq \idle]$ and observe $s_{t+1}$.
 \STATE Update $v_k(j_t,r)=v_k(j_t,r)+1$ and set $t\leftarrow t+1$.
  \ENDWHILE
  \ENDFOR
 \end{algorithmic} 
  \caption{UCRL2-VI}
  \label{alg:UCRL_HARQ}
  \end{footnotesize}
\end{algorithm}

\subsection{A Heuristic Version of the UCRL2 for Standard ARQ}

In this section, we consider the standard ARQ protocol with unknown error probabilities $p_j=g_j(0)$. The estimation procedure of UCRL2-VI can be immediately simplified accordingly, as it only needs to estimate $M$ parameters. In order to reduce the computational complexity, we can replace the costly VI in the algorithm to find the $\tilde{\pi}_k$ with the suboptimal WI  policy given in Section~\ref{sec:WI_arq}. The resulting algorithm, called \textit{UCRL2-Whittle}, selects policy $\tilde{\pi}_k$ in step 16 following the WI policy in Section~\ref{sec:arq}.  
The details of the algorithm are given in Algorithm \ref{alg:ARQ}, where  $\hat{p}(j)$ and $\tilde{p}(j)$ denote the empirical and the optimistic estimate of the error probability for user $j$. 

\begin{algorithm}[h]
\begin{footnotesize}
\caption{UCRL2 for the average AoI with ARQ.}
\begin{algorithmic}[1]
 \renewcommand{\algorithmicrequire}{\textbf{Input:}}
 \renewcommand{\algorithmicensure}{\textbf{Output:}}
 \REQUIRE A confidence parameter $\rho\in (0,1)$, an update parameter $\alpha$, $\lambda$, confidence bound constant $U$, $|\mathcal{S}|$,  $|\mathcal{A}|$
  \STATE $\eta=0$, $t=1$ and observe the initial state $s_1$.
  \FOR {episodes $k= 1,2,\ldots$ }
  \and set $t_k\triangleq t$, \STATE $N_k(j)\triangleq\vert\{\tau<t_k:a_{\tau}=\new_j\}\vert$, ~$E_k(j)\triangleq\vert\{\tau<t_k:a_{\tau}=\new_j, NACK\}\vert$
  \STATE $\hat{p}(j)\triangleq \frac{E_k(j)}{\max\{N_k(j),1\}}$,
   ~$C_k\triangleq\vert\{\tau<t_k:a_{\tau}\neq \idle\}\vert$,
  \STATE $\eta \leftarrow \eta+\alpha (C_k/t_k-\lambda)$. 
  \STATE Compute the optimistic error probabilities: $\tilde{p}(j)\triangleq\max\{0,\hat{p}(j)-\sqrt{\frac{U\log(|\mathcal{S}| |\mathcal{A}| t_k/\rho)}{max\{1,N_k(j)\}}}\}$
  \STATE Use $\tilde{p}(j)$  to find a policy $\tilde{\pi}_k$
  and execute policy $\tilde{\pi}_k$
  \WHILE{$v_k(j)<N_k(j)$}
 \STATE Choose an action $a_t=\tilde{\pi}_k(s_t)$,
 \STATE Obtain cost $\sum_{j=1}^M w_j\delta^{rx}_j +\eta* \mathbbm{1}[a_t\neq \idle]$ and observe $s_{t+1}$
 \STATE Update $v_k(j)=v_k(j)+1$, 
 \and set $t\leftarrow t+1$;
  \ENDWHILE
  \ENDFOR
 \end{algorithmic} 
  \label{alg:ARQ}
  \end{footnotesize}
 \end{algorithm}

\subsection{Average-Cost SARSA with LFA}
\label{sec:LFA}

In \cite{wcnc_paper}, the average-cost SARSA algorithm is employed with \emph{Boltzmann}  (\emph{softmax}) exploration for the average AoI problem with a single user. For the problem with multiple users, the cardinality of the state-action space is large and it is difficult to even store a matrix that has the size of the state-action space. Hence, average-cost SARSA with LFA is employed, where a linear function of features can be used to approximate the Q-function in SARSA~\cite{Puterman_book}. Average-cost SARSA with LFA is an online algorithm similar to average-cost SARSA and UCRL2 algorithms. It improves the performance of average-cost SARSA by improving the convergence rate significantly for multi-user systems and its application is much simpler than the UCRL2 algorithm. 

We approximate the $Q$ function with a linear function $Q_{\theta}$ defined as: $Q_{\theta}(s,a) \triangleq \theta^T \phi(s,a)$, where $\phi(s,a)\triangleq(\phi_1(s,a),\ldots,\phi_d(s,a))^T$ is a given feature associated with the pair $(s,a)$. In our experiments, we set  $\{\phi_i(s,a)\}_{i=1}^{M}$ as the weighted age at the receiver of each user ($w_j\delta^{rx}_j$), $\{\phi_i(s,a)\}_{i=M+1}^{2M}$ as the age at the transmitter of each user ($\delta^{tx}_j$) and $\{\phi_i(s,a)\}_{i=2M+1}^{3M}$ as the retransmission number of each user ($r_j$) given an action $a\in \mathcal{A}$ is chosen in state $s\in \mathcal{S}$: 
\begin{align}
Q_{\theta}(s,a)=\theta_{(0,a)}+\theta_{(1,a)} w_1\delta^{rx}_1+\ldots+\theta_{(M,a)}w_M\delta^{rx}_M+\theta_{(M+1,a)} w_1\delta^{rx}_1+\ldots\nonumber\\+\theta_{(2M,a)}w_M\delta^{rx}_M+\theta_{(2M+1,a)} r_1+\ldots+\theta_{(3M,a)}r_M,
\end{align}
where $\theta_{(0,a)}$ denotes the constant variable. The dimension of $\theta$ is $d=(3M+1)|\mathcal{A}|.$ The outline of the algorithm is given in Algorithm \ref{algo_SARSA}.

\begin{algorithm}
\begin{footnotesize}
\caption{Average-cost SARSA with LFA}
\begin{algorithmic}[1]
 \renewcommand{\algorithmicrequire}{\textbf{Input:}}
 \renewcommand{\algorithmicensure}{\textbf{Output:}}
\REQUIRE Lagrange parameter $\eta$, update parameters $\alpha$, $\beta$, $\gamma$, $\mathcal{A}$, and set $t\leftarrow 1~$, $\theta \leftarrow 0$, $J_{\eta}\leftarrow 0~$   
\FOR{$t=1,2,\ldots$}    
  
\STATE   Find the parameterized policies with Boltzmann exploration: $\pi(a|s_t)=\frac{\exp(-\theta^T\phi(s_t,a) )}{\sum_{a'\in\mathcal{A}}{\exp(-\theta^T\phi(s_t,a') )}}$. 
    \STATE  Sample and execute action $a_t$ from $\pi(a|s_t)$.
    \STATE Observe the next state $s_{t+1}$ and cost $\sum_{j=1}^M \delta^{rx}_j +\eta* \mathbbm{1}[a_t\neq \idle]$.
    \STATE $\pi(a|s_{t+1})=\frac{\exp(-\theta^T\phi(s_{t+1},a))}{\sum_{a'_{t+1}\in\mathcal{A}}{\exp(-\theta^T\phi(s_{t+1},a'))}}$ 
	\STATE Sample $a_{t+1}$ from $\pi(a|s_{t+1})$ \;
    \STATE Compute $C_{\eta}$
    \STATE Update linear coefficients: $ \theta \leftarrow \theta + \alpha_t [\Delta+\eta\cdot\mathbbm{1}[a_t\neq \idle]-J_{\eta}+\theta^T\phi(s_{t+1},a_{t+1})-\theta^T\phi(s_t,a_t)] \phi(s_t,a_t)$,
     \STATE   Update gain: $J_{\eta}\leftarrow J_{\eta}+ \beta_t [\Delta+\eta\cdot\mathbbm{1}[a_t\neq \idle]-J_{\eta}]$, 
     \STATE Update Lagrange multiplier: $\eta \leftarrow \eta + \gamma_t (C_{\eta}-\lambda)$

\ENDFOR
\end{algorithmic} 
\label{algo_SARSA}
\end{footnotesize}
\end{algorithm}

The performance of average cost SARSA with LFA is demonstrated in Section~\ref{sec:results}. We note that linear approximators are not always effective, and the performance can be improved in general by using a non-linear approximator. However; the performance also depends on the availability of data, i.e., the linear approximator may perform better if the available data set is limited.)


\subsection{Deep Q-Network (DQN)}

A DQN uses a multi-layered neural network in order to estimate the values of $Q(s,a)$; that is, for a given state $s$, DQN outputs a vector of state-action values, $Q_{\theta}(s, a)$, where $\theta$ denotes the parameters of the network. That is, the neural network is a function from $2M$ inputs to $|\mathcal{A}|$ outputs which are the estimates of the Q-function $Q_{\theta}(s,a)$.  We apply the DQN algorithm of \cite{Mnih2015} to learn a scheduling policy. We create a fairly simple feed-forward neural network of $3$ layers, one of which is the hidden layer with 24 neurons. We also use \textit{Huber loss} \cite{huber1964} and the \textit{Adam} algorithm \cite{Adam2014} to conduct stochastic gradient descent to update the weights of the neural network. 

We exploit two important features of DQNs  as proposed in \cite{Mnih2015}: \textit{experience replay} and a \textit{fixed target network}, both of which provide algorithm stability. For \textit{experience replay}, instead of training the neural network with a single observation $<\!s,a,s',c(s,a)\!>$ at the end of each step,  many experiences (i.e., (state, action, next state, cost) quadruplets) can be stored in the replay memory for batch training, and a minibatch of observations randomly sampled at each step can be used. The DQN uses two neural networks: a target network and an online network. The \textit{target network}, with parameters $\theta^-$, is the same as the online network except that its parameters are updated with the parameters $\theta$ of the online network after every $T$ steps, and $\theta^-$ is kept fixed in other time slots. For a minibatch of of observations for training, temporal difference estimation error $e$ for a single observation can be calculated as 
\vspace{-0.1in}
\begin{align}
    e = Q_{\theta}(s,a)-(-c(s,a)+\gamma Q_{\theta^-}(s',\argmax Q_{\theta}(s',a))).
\end{align} 

\textit{Huber loss} is defined by the squared error term for small estimation errors, and a linear error term for high estimation errors, allowing less dramatic changes in the value functions and further improving the stability. For a given estimation error $e$ and loss parameter $d$, the Huber loss function, denoted by $\mathrm{L}^d(e)$, and the average loss over the minibatch, denoted by $\mathcal{B}$, are computed as
\begin{align*}
\mathrm{L}^d(e)=
    \begin{cases}
    e^2 &\textrm { if } e\leq d \\
    d (|e|-\frac{1}{2}d)) &\textrm { if } e> d,
    \end{cases}
     ~~\textrm{and }~~\mathrm{L}_{\mathcal{B}}= \frac{1}{|\mathcal{B}|}\sum_{<s,a,s',c(s,a)>\in \mathcal{B}} \mathrm{L}^d(e).
\end{align*}

We apply the $\epsilon$-greedy policy to balance
 exploration and exploitation, i.e., with probability $\epsilon$ the source randomly selects an action, and with probability $1-\epsilon$ it chooses the action with the minimum Q value.  We let $\epsilon$ decay gradually from $\epsilon_0$ to $\epsilon_{min}$; in other words, the source explores more at the beginning of training and exploits more at the end. The hyperparameters of the DQN algorithm are  tuned for our problem experimentally, and are given in Table~\ref{table_DQN}.
\begin{table}
\caption{Hyperparameters of DQN algorithm used in the paper}
\vspace{-0.3in}
\begin{footnotesize}
\begin{center}
 \label{table_DQN}
 \scriptsize
\begin{tabular}{ |c|c|c|c|c|c|c|c| } 
\hline
 Parameter & Value & Parameter & Value & Parameter & Value & Parameter & Value \\ 
 \hline \hline
 discount factor $\gamma$ & 0.99 & optimizer & Adam & activation function & ReLU &learning rate $\alpha$  & $10^{-4}$\\ 
 \hline
 minibatch size & 32 & loss function & Huber loss & hidden size & 24  & $\epsilon$ decay rate $\beta$ & 0.9\\ 
 \hline
 replay memory length & 2000 & exploration coefficient $\epsilon_0$ & 1 & episode length $T$ & 1000  & $\epsilon_{min}$ & 0.01 \\ \hline 
\end{tabular}
\end{center}
\end{footnotesize}
\end{table}

\section{Numerical Results}
\label{sec:results}

\begin{table}
 \caption{{\highlight A summary of RL algorithms presented in this paper}}
 \vspace{-0.3in}
 \begin{footnotesize}
\begin{center}
 \label{table_RL}
  \scriptsize
 \begin{tabular}{ |p{0.2\textwidth}|p{0.35\textwidth}|p{0.35\textwidth}|} 
 \hline
  RL Method & Advantages & Disadvantages \\ 
  \hline \hline
 RVI~\cite{Puterman_book} & simple, converges to optimal for MDPs & requires apriori information on system characteristics \\ \hline
 Average cost SARSA (tabular)~\cite{journal_paper} & simple, fully online & does not perform well for large state spaces, requires an approximation to finite state spaces \\ 
  \hline
     Average cost SARSA with LFA  & converges faster than Average cost SARSA  applicable to infinite state spaces & convergences slower than UCRL2 and DQN, stability issues for average cost problems\\ 
  \hline
  UCRL2-VI & theoretical convergence guarantee & large computational complexity due to VI \\ 
  \hline
  UCRL2-Whittle & low computational complexity & based on the computation of WI   \\ \hline 
  DQN~\cite{Mnih2015} & performs well and applicable to infinite or large state spaces & requires pre-training \\ \hline 
 \end{tabular}
 \end{center}
 \end{footnotesize}
\end{table}

In this section, we provide numerical results for the proposed learning algorithms, and compare the achieved average performances.  First, we analyze the average AoI with the standard ARQ protocol. The asymptotic average AoI as a function of the resource constraint $\lambda$ is shown in Figure~\ref{fig:cons_vs_AoI} for a 3-user system with error probabilities $p=g(0)=[0.5~ 0.2~ 0.1]$.  It can be seen from Figure~\ref{fig:cons_vs_AoI} that both UCRL2-VI and UCRL2-Whittle perform very close to the lower bound, particularly when $\lambda$ is small, i.e., the system is more constrained. Although UCRL2-Whittle has a significantly lower computational complexity, it performs very close to UCRL2-VI for all $\lambda$ values. 

\begin{figure}[!t]
\centering
\includegraphics[scale=0.5]{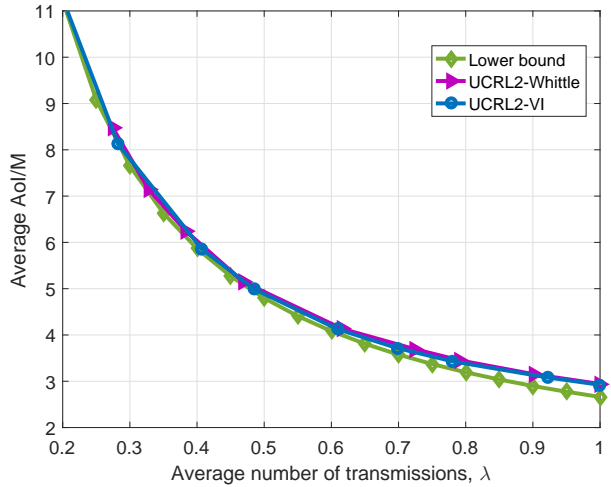}
\vspace{-0.2in}
\caption{Average AoI with respect to $\lambda$ for a 3-user network under the standard ARQ protocol, with error probabilities $p=[0.5~ 0.2~ 0.1]$, and $w_j=1,~\forall j$. Time horizon is set to $T=10^5$, and the results are averaged over $100$ runs.}
\label{fig:cons_vs_AoI}
\end{figure}

Figure~\ref{fig:arq_size} illustrates the mean and variance of the average AoI with standard ARQ with respect to the size of the network when there is no constraint on the average number of transmissions (i.e. $\lambda=1$) and the performance of the UCRL2-Whittle is compared with the lower bound (UCRL2-VI is omitted since its performance is very similar to UCLR2-Whittle and has a much higher computational complexity, especially for large $M$). The performance of UCRL2-Whittle is close to the lower bound and is very similar to that of the WI policy,  which requires a priori knowledge of the error probabilities. {\highlight  Moreover, our algorithm outperforms the benchmark \textit{greedy policy}, which always transmits to the user with the highest age (i.e., $a=\new_j$, such that $j=argmax \delta^{rx}_j, \forall j\in [M]$), as well as the \textit{round robin policy}, which transmits to each user in turns.}

\begin{figure}[!t]
\centering
\includegraphics[scale=0.5]{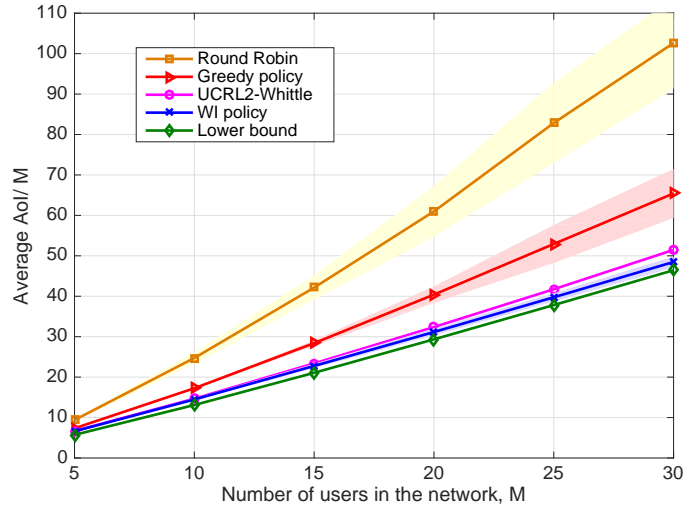}
\vspace{-0.2in}
\caption{Average AoI under the standard ARQ protocol for networks with different sizes, where $p_j=(j-1)/M$, $\lambda=1$, and  $w_j=1,~\forall j$. The results are obtained after $10^4$ time steps and averaged over 100 runs (both the mean and the variance are shown).}
\label{fig:arq_size}
\end{figure}

The performance of the proposed RL algorithms (the mean and the variance of the average AoI) is compared with the performance of average cost SARSA, proposed in \cite{wcnc_paper} for a point-to-point status update system (which is used as a benchmark policy in this paper), in Figure~\ref{fig:learn_ARQ}. The DQN algorithm in the figure is configured as in Table~\ref{table_DQN} and trained for $500$ episodes. The average AoI for DQN is obtained after $10^5$ time steps and averaged over $100$ runs. UCRL2-Whittle and average cost SARSA with LFA converge much faster compared to the standard average-cost SARSA, and they perform very close to the transmission scheduling computed by RVI with known error probabilities. Although DQN and UCRL2-Whittle perform better than average cost SARSA with LFA, DQN requires a training time before running the simulation.


Figure~\ref{fig:learn_HARQ} shows the performance of the learning algorithms for the HARQ protocol (the mean and the variance of the average AoI) for a 2-user scenario. Similarly to Figure~\ref{fig:learn_ARQ}, DQN is trained for 500 episodes with configuration in Table~\ref{table_DQN}.  It is worth noting that although UCRL2-VI converges to the optimal policy in fewer iterations than average-cost SARSA and average-cost SARSA with LFA, iterations in  UCRL2-VI are computationally more demanding since the algorithm uses VI in each epoch. Therefore, UCRL2-VI  is not practical for problems with large state spaces, in our case for large $M$. On the other hand, UCRL2-Whittle can handle a large number of users since it is based on a simple index policy instead of VI. As illustrated in both Figures~\ref{fig:learn_ARQ} and \ref{fig:learn_HARQ} that LFA significantly improves the performance of average cost SARSA and DQN with neural network estimator, and UCRL2-Whittle improves the performance of RL even more.



\begin{figure}[!t]
\centering
\includegraphics[scale=0.5]{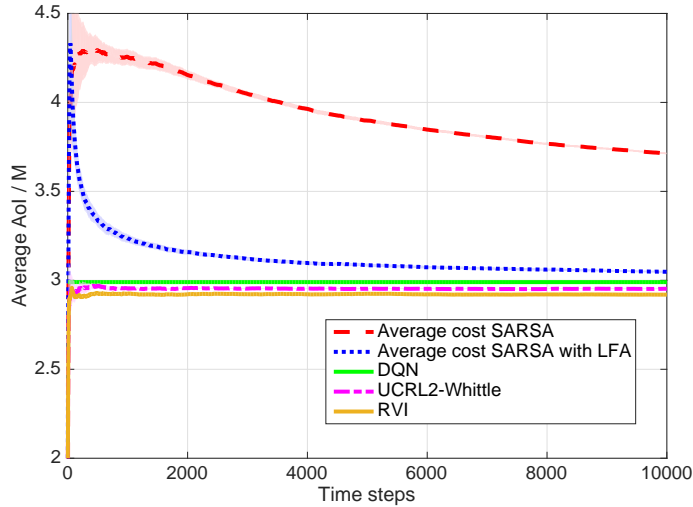}
\vspace{-0.2in}
\caption{Average AoI for a 3-user ARQ network with error probabilities $p=[0.5~ 0.2~ 0.1]$, $\lambda=1$, and $w_j=1,~\forall j$. The results are obtained after $10^4$ time steps and averaged over 100 runs (both the mean and the variance are shown).}
\label{fig:learn_ARQ}
\end{figure}

\begin{figure}[!t]
\centering
\includegraphics[scale=0.5]{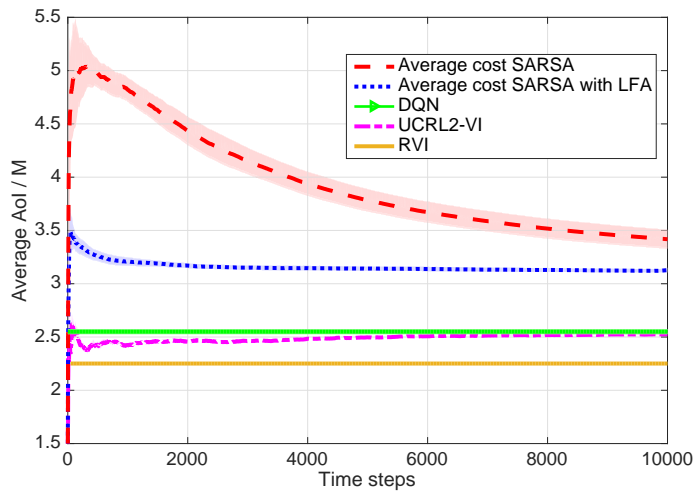}
\vspace{-0.2in}
\caption{Average AoI for a 2-user HARQ network with error probabilities $g_1(r_1)=0.5 \cdot 2^{-r_1}$ and $g_2(r_2)=0.2 \cdot 2^{-r_2}$, where $\lambda=1$ and $w_j=1,~\forall j$. The simulation results are averaged over 100 runs (both the mean and the variance are shown).}
\label{fig:learn_HARQ}
\end{figure}

The performance of FR HARQ protocol as described in Section~\ref{sec:harq1} for packets MDS-coded with $(n_s,k_s)=(5,3)$ is shown in Figure~\ref{fig:harq_size}. The probability that an MDS-coded packet is not correctly decoded is given in \eqref{eq:FR} where the symbol transmission error probability is set to $p_{s,j}=(j-1)/2M$ for user $j$. As Figure~\ref{fig:harq_size} illustrates, average AoI per user increases linearly with the number of users in the network and the WI policy performs very close to the lower bound for both $\lambda=0.6$ and $\lambda=1$.

\begin{figure}[!t]
\centering
\includegraphics[scale=0.5]{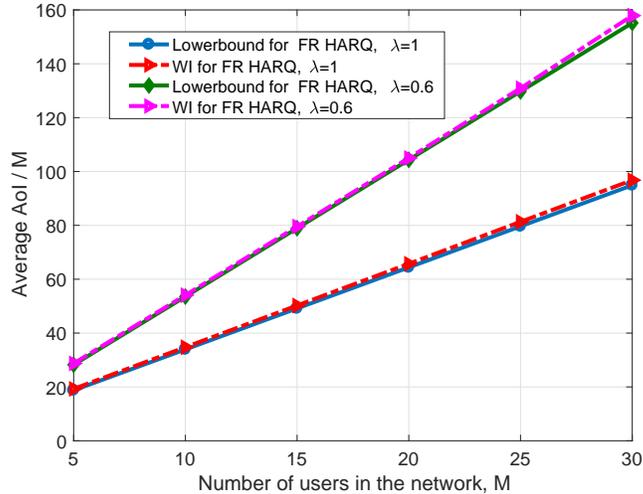}
\vspace{-0.2in}
\caption{Average AoI for networks with different sizes and constraints, where $p_{s,j}=(j-1)/2M$, $(n_s,k_s)=(5,3)$, and  $w_j=1,~\forall j$. The results are obtained after $10^4$ time steps and averaged over 100 runs.}
\label{fig:harq_size}
\end{figure}

Figure~\ref{fig:DQN_HARQ} shows the evolution of average AoI across 10 users with DQN after different number of training episodes, where each training episode consists of $1000$ time steps.  Following \cite{IEEEstandard,LTEmax}, $r_{max}$ is set to 3 and general HARQ protocol is considered with $g_j(r_j)=(j-1)/M \cdot 2^{-r_j}$ motivated by the exponentially decreasing error profile of HARQ protocols studied in \cite{harq2003,hybrid2001}. Figure~\ref{fig:DQN_HARQ} illustrates that average AoI achieves its minimum after about 150 episodes of training. We note that we did not run UCRL2-VI and average-cost SARSA algorithms for a 10-user HARQ problem since state space is large and convergence takes too long compared to the DQN algorithm.

\begin{figure}[!t]
\centering
\includegraphics[scale=0.5]{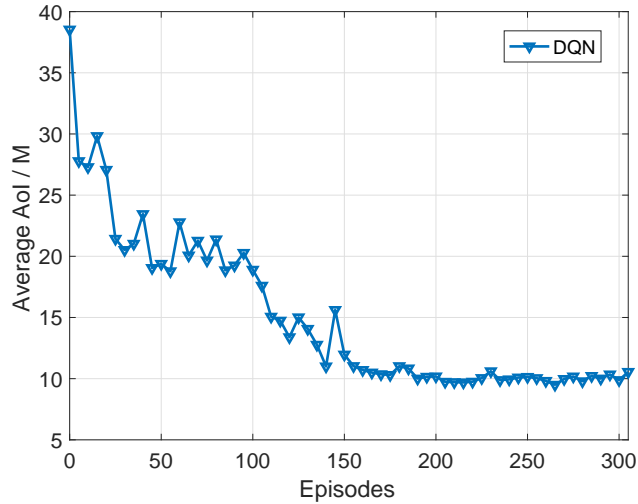}
\vspace{-0.2in}
\caption{Average AoI obtained by DQN algorithm with respect to the number of training episodes for a 10-user HARQ network with error probabilities $g_j(r_j)=(j-1)/M \cdot 2^{-r_j}$, $j \in [M]$ and $r_j \in [r_{max}]$, $\lambda=1$, $r_{max}=3$ and $w_j=1,~\forall j$. Each episode consists of $1000$ time steps and the results are obtained for a single run.}
\label{fig:DQN_HARQ}
\end{figure}

{\highlight  A summary of the employed RL algorithms together with their strengths and weakness are given in Table~\ref{table_RL}.} We concluded that the  choice of the learning algorithm to be adopted depends on the scenario and system characteristics. It has been shown that average-cost SARSA is not effective considering the large state space of the multi-user problem. Different state-of-the-art RL methods are presented including SARSA with LFA, UCRL2, and DQN. The performance of UCRL2-VI algorithm is close to optimal for small networks, i.e. consisting of 1-5 users, and enjoys theoretical guaranties. However, UCRL2-VI is not favorable for large networks due to its computational complexity resulting from value iteration, and UCRL2-Whittle is preferable. On the other hand, UCRL2-Whittle cannot be employed for a general HARQ multi-user system. Similarly, SARSA with LFA has decreased the average AoI significantly for small-size networks with HARQ; however, is not effective for large networks and SARSA with LFA lacks stability. A non-linear approximation with DQN performs well for large networks, while it is not fully online and it requires a training time before running the algorithm.

\section{Conclusion}
\label{sec:conclusion}

We considered scheduling the transmission of status updates to multiple users with the weighted average AoI as the performance measure. Under a resource constraint at the source node, the problem is modeled as a CMDP and the structure of the optimal policy is established. Lower bounds on the average AoI are derived for special cases of the problem. RL algorithms were presented for scenarios where the error probabilities may not be known in advance, and were numerically shown to provide near-optimal performance in simple scenarios. It has been demonstrated that the optimal choice of the learning algorithm to be adopted depends on the scenario and system characteristics.  The algorithms presented in this paper are also relevant to other multi-user systems concerning the timeliness of information. {\highlight The AoI for multi-user systems without feedback, or under imperfect and delayed ACK/NACK feedback will be studied as a future work. }

\appendix

\subsection{Proof of Proposition~\ref{prop_WI}} 
\label{append_WI}


We note that each sub-problem \eqref{eq:BellmanC}  coincides with the Lagrangian formulation of the single-user problem which we previously studied in \cite{wcnc_paper,journal_paper}. According to Lemma~1 of~\cite{journal_paper}, the policy which solves the Bellman optimality equations is a threshold policy, such that $a=\new_j$ if and only if $\delta^{rx}_j\ge\Gamma_j$
for an appropriate threshold $\Gamma_j$. As a consequence, the cost function (the sum of the average AoI and the average transmission cost for that user), given a threshold $\Gamma_j$ and cost of transmission $C$, can also be obtained in closed form as
\begin{align}
\label{eq:JetaARQ_m}
L^{\Gamma_j}_{C}=\frac{1}{\Gamma_j+\frac{p_j}{1-p_j}} \left(\frac{(\Gamma_j-1)\Gamma_j}{2}+ \frac{C+\Gamma_j}{1-p_j}+\frac{p_j}{(1-p_j)^2}\right). 
\end{align}

Using Definition~\ref{def:whittle} and \eqref{eq:JetaARQ_m}, we can compute the WI in closed form: By the definition of the threshold policy, we can find a $C$ such that both choices of thresholds $\Gamma_j=\delta^{rx}_j$ and $\Gamma_j=\delta^{rx}_j+1$ result in the same average cost, i.e., the average cost $L_C^{\delta^{rx}_j}$ should be equal to $L_C^{\delta^{rx}_j+1}$, which can be computed using \eqref{eq:JetaARQ_m}:
\begin{equation}
I_j(\delta^{rx}_j)= \frac{1}{2} w_j \delta^{rx}_j (1-p_j)(\delta^{rx}_j+\frac{1+p_j}{1-p_j}).
\end{equation}

The ${(M+1)}^{th}$ arm stands for the idle action. The Lagrange multiplier $\eta$ represents the cost of transmission and $C$ represents the cost of staying idle. If $\eta^*$ is equal to $C$ then both actions are equally desirable; that is, the WI for the ${M+1}^{th}$ arm is $I_{M+1}=\eta^*$. 


Note that an optimal threshold $\Gamma_j^*$ for a given $C$, which minimizes \eqref{eq:JetaARQ_m}, can be computed for a given $C$ as follows:
\begin{align*}
\Gamma^*_j \in \left\{ \left\lfloor\frac{\sqrt{2C(1-p_j)+p_j}-p_j}{1-p_j} \right\rfloor, \left\lceil\frac{\sqrt{2C(1-p_j)+p_j}-p_j}{1-p_j} \right\rceil  \right\}.
\end{align*}
As C increases from $0$ to $\infty$, $\Gamma_j^*$ monotonically increases from $0$ to $\infty$, and $S_j^{\new_j}(C)$ monotonically decreases from the entire state space $\mathcal{S}$ to an empty set. Thus, the problem is indexable. 
\qed

\subsection{Proof of Theorem~\ref{thm:bound}} 
\label{append_LB1}

 \begin{proof}

The system model and the definition of action $a_t\in \mathcal{A}$ implies the following constraints in addition to the average number of transmissions constraint in Problem~\ref{problem}: (i) updates occur in discrete time slots, and (ii) collisions are not allowed, i.e., no more than one user can be updated in a slot. In order to derive the lower bound, we relax the constraints (i) and (ii). 

First, we relax constraint (ii) and decouple the model to $M$ point-to-point status update systems each with a single user to serve. Let $J^{\pi}_j$ and $C_j^{\pi}$ denote respectively, the expected average AoI and the expected average number of transmissions for user $j$ if we follow policy $\pi$.  Assume that each user $j$ has an average number of transmissions constraint of $\lambda_j$ is imposed on user $j$, and we have:
\begin{align}
J^{*} \geq \sum_{j=1}^M w_jJ^*_j \geq \sum_{j=1}^M w_jJ_{j,LB}, ~\textrm{ given that } \sum_j^{M}{\lambda_j}=\lambda, \label{eq:decoupling}
\end{align}
where $J^*_j\leq J^{\pi}_j$, $\forall\pi$, denotes the minimum expected average AoI for user $j$ given $C^{\pi}_j \leq \lambda_{j}$ and $J_{j,LB}$ denotes a lower bound on the average AoI for user $j$.

The first inequality in \eqref{eq:decoupling} results from the relaxation of (ii) and decoupling the users. Then, we minimize the average AoI for a single user $j$ under a constraint $C_j\leq \lambda_j$, which reduces Problem~\ref{problem} to a single user problem which we previously studied in Section~\ref{sec:arq} of \cite{journal_paper}. 
The second inequality in \eqref{eq:decoupling} is due to relaxing the discrete time assumption in (i) in order to find a lower bound $J_{j,LB}$ on the average AoI for user $j$ by using closed form average AoI and resource consumption expressions also obtained in \cite{journal_paper}.


According to Theorem~2 of~\cite{journal_paper}, the policy that solves the Bellman optimality equations is a threshold policy. Then, for a given threshold, the expected average AoI and average number of transmissions can be computed in closed form similarly to~\cite{journal_paper}:
\begin{align}
 J^{\Gamma_j}_j = \frac{(\Gamma_j(1-p_j)+p_j)^2+p_j}{2(1-p_j)(\Gamma_j(1-p_j)+p_j)}+\frac{1}{2}, \textrm{ and }
C^{\Gamma_j}_j = \frac{1}{\Gamma_j (1-p_j)+p_j}. \label{eq:cj}
\end{align}

The lower bound $J_{j,LB}$ on the average AoI for a single user can be computed by substituting $C^{\Gamma_j}_j$ into $J^{\Gamma_j}_j$ and using the constraint $C^{\Gamma_j}_j\leq \lambda_j$: 
\begin{align}
    J^{\Gamma_j}_j = \frac{1/C_j^2+p_j}{2(1-p_j)/C_j}+\frac{1}{2} \geq \frac{1/\lambda_j^2+p_j}{2(1-p_j)/\lambda_j}+\frac{1}{2} \triangleq J_{j,LB}, \label{eq:single_lower}
\end{align}
and so $J^{*}_j\geq J_{j,LB}$ since $J^{\Gamma_j}_j\geq J_{j,LB}$ for all $\Gamma_j$ values satisfying $C^{\Gamma_j}_j\leq \lambda_j$. 
By inserting \eqref{eq:single_lower} to \eqref{eq:decoupling}, we obtain
\begin{align}
J^{*} \geq \sum_{j=1}^M {w_j\left(\frac{1/\lambda_j^2+p_j}{2(1-p_j)/\lambda_j}+\frac{1}{2}\right)} ~\textrm{ given that } \sum_j^{M}{\lambda_j}=\lambda. \label{eq:sum}
\end{align}
Note that the right hand side (RHS) of \eqref{eq:sum} is a convex function of $\lambda_1,\ldots,\lambda_M$. Then, the optimal $\lambda_j$ to minimize the RHS of \eqref{eq:sum} can be found numerically when $p_j$ and $\lambda$ are given. However, in order to obtain a closed form solution which can be easily computed and compared to the state-of-art bounds in the literature,  we approximate the RHS and obtain a slightly looser bound on the performance in closed form:
\begin{subequations}
\begin{align}
J^{*} &\geq  \min_{\lambda_1,\ldots,\lambda_M: \sum_{i=1}^M{\lambda_i= \lambda}}{\sum_{j=1}^M {\frac{w_j}{2\lambda_j(1-p_j)}}}+\min_{\lambda_1,\ldots,\lambda_M: \sum_{i=1}^M{\lambda_i= \lambda}}{\sum_{j=1}^M {\frac{w_j\lambda_jp_j}{2(1-p_j)} }}   +\frac{1}{2}\sum_{j=1}^M {w_j} \label{eq:3}\\
&=\frac{1}{2\lambda}{\left(\sum_{j=1}^M {\sqrt{\frac{w_j}{1-p_j}}}\right)}^2+\frac{\lambda}{2}\min_j{ \left(\frac{w_jp_j}{1-p_j}  \right)   }+\frac{1}{2}\sum_{j=1}^M w_j. \label{upperbound}
\end{align}
\end{subequations}

Here inequality \eqref{eq:3} results from the fact that independently minimizing the terms of a sum is smaller than the minimization of the sum; and the first term in \eqref{upperbound} is equal to the minimum of the first term in \eqref{eq:3} with the constraint of  $\sum_{j=1}^M{\lambda_j}=\lambda$, where a solution for $\lambda_j$ is found using the Lagrangian relaxation and the Karush–Kuhn–Tucker conditions \cite{Boyd2004}, leading to
\begin{align}
\lambda^*_j=\frac{\sqrt{w_j/(1-p_j)}}{\sum_{i=1}^M{\sqrt{w_i/(1-p_i)}}}.
\end{align}
The second term in \eqref{upperbound} is the minimum of the second term in \eqref{eq:3} under the constraint of $\sum_{j=1}^M{\lambda_j}=\lambda$,  which proves the theorem. 
\end{proof}

\subsection{Proof of Proposition~\ref{proof_WI_HARQ}}

\label{AppendixI} 




Following similar steps to Appendix~\ref{append_WI}, the problem can be approximated by decoupling the system into $M$ point-to-point status update systems with FR HARQ, where the cost of a single transmission is $C$. The state-action cost function and optimality equations are given.
\begin{align}
 h_c(\delta,0)&=\min\left( Q(\delta,0,\new_j), Q(\delta,0,\idle)\right), \\
    Q(\delta,0,\new_j) &=\left(\sum_{n=0}^{n_s-1}{\delta+n+C-L_j^C}\right)+p_eh_C(\delta+n_s,0)+(1-p_e)h_C(n_s,0),\label{eq:new_action}\\
    Q(\delta,0,\idle)&=\left(\sum_{n=0}^{n_s-1}{\delta+n-L_j^C}\right)+h_c(\delta+n_s,0). \label{eq:idle_action}
\end{align}


Next we investigate the optimal policy in the subsystem for a single user (user $j$), treated independently from the other decisions, which can be shown to be of threshold type:

\begin{lemma}
The decision to start transmitting to user $j$ ($a_j=\new_j$) is monotone with respect to the age $\delta^{rx}_j$, that is if $a_j^*(\delta^1,0)=\new_j$, then $a_j^*(\delta^2,0)=\new_j$ for all $\delta^2 \ge \delta^1$. \end{lemma}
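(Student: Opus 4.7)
The plan is to reduce the monotonicity of the optimal action to a structural property (convexity) of the differential cost function $h_C(\cdot,0)$, and then to establish that property by induction on value iteration.

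First, I would compare the two action values directly. Subtracting \eqref{eq:idle_action} from \eqref{eq:new_action}, the age-dependent linear terms $\sum_{n=0}^{n_s-1}(\delta+n-L_j^C)$ cancel, giving
\begin{align*}
Q(\delta,0,\new_j)-Q(\delta,0,\idle)
= n_s C - (1-p_e)\bigl[h_C(\delta+n_s,0)-h_C(n_s,0)\bigr].
\end{align*}
Thus $\new_j$ is preferred at age $\delta$ iff $(1-p_e)\bigl[h_C(\delta+n_s,0)-h_C(n_s,0)\bigr]\ge n_s C$. Since the right-hand side is constant in $\delta$, the claim reduces to showing that $\delta\mapsto h_C(\delta+n_s,0)-h_C(n_s,0)$ is non-decreasing in $\delta$, which is exactly the statement that $h_C(\cdot,0)$ has non-decreasing first differences, i.e.\ is convex on $\{1,2,\ldots,N\}$.

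Next I would prove convexity (and monotonicity) of $h_C(\cdot,0)$ by induction on value iteration. Let $\mathcal{T}$ denote the Bellman operator associated with \eqref{eq:new_action}--\eqref{eq:idle_action} (subtracting the constant $L_j^C$ to center), and let $h^{(0)}\equiv 0$, $h^{(k+1)}=\mathcal{T}h^{(k)}$. I would show that the class
\[
\mathcal{F}=\{f:[N]\to\mathbb{R}\ :\ f\text{ non-decreasing and convex}\}
\]
is preserved by $\mathcal{T}$. Each application of $\mathcal{T}$ takes a pointwise minimum of two candidate values: $Q(\cdot,0,\new_j)$ is an affine function of $\delta$ plus $p_e f(\delta+n_s)$, which is convex and non-decreasing whenever $f$ is; $Q(\cdot,0,\idle)$ is likewise affine plus $f(\delta+n_s)$. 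The minimum of two non-decreasing convex functions is non-decreasing, but the minimum of two convex functions is not in general convex; the subtle point here is exactly where the threshold structure comes from, because for the minimum we only need convexity at the crossing point. A cleaner route is to argue directly that $\mathcal{T}$ preserves non-decreasing differences: write out $h^{(k+1)}(\delta+1)-h^{(k+1)}(\delta)$ using the selector $a^*(\delta)$, and use that any ``wrong'' action at $\delta+1$ yields an upper bound on $h^{(k+1)}(\delta+1)$ while the minimizer at $\delta$ yields $h^{(k+1)}(\delta)$ exactly, to bound the difference in terms of $h^{(k)}$'s differences. Taking $k\to\infty$ and invoking convergence of value iteration (available since the relaxed single-user problem is a standard average-cost MDP on a finite state space) transfers the property to $h_C(\cdot,0)$.

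The main obstacle will be the inductive step for convexity: the minimum of two convex functions is not convex in general, so one must exploit the specific structure of the two $Q$-functions, namely that they share the same $\delta+n_s$-argument inside $h^{(k)}$ with only the coefficients $p_e$ versus $1$ differing. Once convexity and monotonicity of $h_C(\cdot,0)$ are secured, the displayed identity for $Q(\delta,0,\new_j)-Q(\delta,0,\idle)$ yields immediately that the set $\{\delta:a_j^*(\delta,0)=\new_j\}$ is upward closed, which is precisely the threshold structure asserted in the lemma.
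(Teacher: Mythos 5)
Your reduction contains a genuine error that derails the rest of the argument. After correctly computing
\begin{align*}
Q(\delta,0,\new_j)-Q(\delta,0,\idle)= n_s C-(1-p_e)\bigl[h_C(\delta+n_s,0)-h_C(n_s,0)\bigr],
\end{align*}
you claim that upward-closedness of the set where $\new_j$ is optimal reduces to showing that $\delta\mapsto h_C(\delta+n_s,0)-h_C(n_s,0)$ is non-decreasing, ``which is exactly the statement that $h_C(\cdot,0)$ has non-decreasing first differences, i.e.\ is convex.'' It is not: $h_C(n_s,0)$ is a fixed constant independent of $\delta$, so the map in question is non-decreasing if and only if $h_C(\cdot,0)$ itself is non-decreasing. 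Monotonicity of $h_C$, not convexity, is all that is required. This is precisely the property the paper establishes, and it does so by a coupling argument: run two copies of the single-user chain from $\delta^1\le\delta^2$ with identical channel realizations, apply the optimal action sequence of the $\delta^2$-system to both, observe that the first system's age stays below the second's at every step, and conclude that the (suboptimal) cost-to-go from $\delta^1$ upper-bounds $h_C(\delta^1,0)$ while being dominated by $h_C(\delta^2,0)$.

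Because of the mistaken reduction, the bulk of your plan chases convexity of $h_C$, and you yourself flag the fatal obstacle: the pointwise minimum of two convex functions need not be convex, and you do not resolve how the specific structure of the two $Q$-functions would rescue the inductive step. None of this is needed. If you drop convexity and keep only the monotone half of your induction, the value-iteration route does work as a legitimate alternative to the paper's coupling: the Bellman operator maps non-decreasing functions to non-decreasing functions (each candidate $Q(\cdot,0,a)$ is an increasing affine function of $\delta$ plus a non-negative multiple of $h^{(k)}(\delta+n_s,0)$, and a minimum of non-decreasing functions is non-decreasing), and convergence of relative value iteration for this finite unichain average-cost MDP transfers the property to the limit. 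The fix is therefore small, but as written the proposal targets the wrong structural property and stalls on an obstacle of its own making.
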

\vspace{-0.2in}
\begin{proof}
A monotone threshold policy is optimal if $Q(\delta^{rx}_j,0,a_j)$ has a \textit{sub-modular} structure  in $(\delta^{rx}_j,a_j)$\cite{Topkis1978}, that is, 
\vspace{-0.2in}
\begin{align}
Q(\delta^1,\new_j)-Q(\delta^1,\idle)\geq Q(\delta^2,\new_j)-Q(\delta^2,\idle), \label{eq:submodular_m}
\end{align} 
for any $\delta^2\geq \delta^1$. From \eqref{eq:idle_action} and \eqref{eq:new_action}, for any $\delta>0$, we have
\begin{align}
Q(\delta,0,\new_j)-Q(\delta,0,\idle)&=n_sC+(1-p_e) h_{C}(n_s,0)-(1-p_e) h_{C}(\delta+n_s).
\end{align}
We can see that (\ref{eq:submodular_m}) holds if and only if $h_{C}(\delta,0)$ is a non-decreasing function of the age. We compare the costs incurred by the systems starting in states $\delta^1$ and $\delta^2$ via coupling the stochastic processes governing the behavior of the system; that is, we assume that the realization of the channel behavior is the same for both systems over the time horizon (this is valid since channel states/errors are independent of the ages and the actions). Assume a sequence of actions $\{a^2_t\}_{t=1}^{\infty}$ corresponds to the optimal policy starting from age $\delta^2$ for a particular realization of channel errors, and let $\{\delta^i_t\}$ denote the sequence of states obtained after following actions $\{a^2_t\}$ starting from state $\delta^{rx}_1=\delta^i$, $i=1,2$. Then, if $\delta^1 \le \delta^2$, clearly $\delta^1_t \le {\delta^{rx}_2}^t$ for all $t$.
Furthermore, by the Bellman optimality equation \eqref{eq:Bellman},
\begin{align*}
h_{C}(\delta^1,0) &  \le \Exp{\sum_{t=1}^\infty (\delta^1_t+ C \cdot \mathbbm{1}[a^2_t \neq \idle]-L^*_j) \bigg| \delta^1_1=\delta^1} \\
& \le \Exp{\sum_{t=1}^\infty (\delta^2_t+ C \cdot \mathbbm{1}[a^2_t \neq \idle]-L^*_j) \bigg| \delta^1_1=\delta^2} = h_{C}(\delta^2,0)~.
\end{align*}
This completes the proof of the lemma.
\end{proof}


Note that under a threshold policy with threshold $\Gamma_j$, the AoI process is a renewal process with i.i.d. renewal periods of $X\triangleq\Gamma_j-n_s+S_j$, where $S_j$ is the random time between the start of transmission of a status update and successful decoding of that update at the receiver of user $j$ similarly to \cite{Najm2017,Yates2017,Najm2019}. Then, the average AoI can be be written as expectation of the area under the AoI graph divided by the expected value of $X$ (denoted by $\Exp{X}$), which is given by:
\vspace{-0.1in}
\begin{align}
J_j^{\Gamma_j} = \frac{\Exp{S^2}+(\Gamma_j-n_s)\Exp{S_j}}{2((\Gamma_j-n_s)+\Exp{S_j})} +\frac{\Gamma_j+n_s}{2}-\frac{1}{2},
\label{eq:FR_age}
\end{align}
where the constant ($-1/2$) results from the fact that we consider a stair-step function to represent the AoI. Similarly, the average number transmissions is given by:
\begin{align}
C_j^{\Gamma_j} = \frac{\Exp{S_j}}{\Exp{X}}= \frac{\Exp{S_j}}{(\Gamma_j-n_s)+\Exp{S_j}}.
\label{eq:FR_transmission}
\end{align}
Thus, we have
\begin{align}
L_C^{\Gamma_j} = \frac{\Exp{S_j^2}+(\Gamma_j-n_s)\Exp{S_j}}{2((\Gamma_j-n_s)+\Exp{S_j})} +\frac{\Gamma_j+n_s}{2}-\frac{1}{2}+ C\cdot \frac{\Exp{S_j}}{(\Gamma_j-n_s)+\Exp{S_j}}, \label{eq:FR_whole}
\end{align}
where $\Exp{S_j}=\frac{n_s}{1-p^{FR}_j}$, and $\Exp{S_j^2}=\frac{n_s^2 (1+p^{FR}_j)}{(1-p^{FR}_j)^2}$ for FR protocol \cite{Najm2017} and $p^{FR}_j$ is given as in \eqref{eq:FR}.

Next steps for the derivation of WI is similar to Appendix~\ref{append_WI}. By the definition of threshold policy, we  solve \eqref{eq:FR_whole} and try to find a $I^{FR}_j(\delta^{rx}_j)\triangleq C$ such that actions of staying idle and starting the transmission to user $j$ are equally desirable, that is, $L_C^{\delta^{rx}_j}$ and $L_C^{\delta^{rx}_j+1}$ give the same result. 
\qed

\subsection{Proof of Theorem~\ref{thm_FIR}}

\label{AppendixJ} 
The proof is similar to that of Theorem~\ref{thm:bound} in Appendix~\ref{append_LB1}. First, constraints of  (i) and (ii) are relaxed and we obtain \eqref{eq:decoupling}. 
$J^*_j$ for FR HARQ protocol can be computed using \eqref{eq:FR_age} and \eqref{eq:FR_transmission}.  Note that both $J_j^{\Gamma_j}$ and $C_j^{\Gamma_j}$ are convex functions of $\Gamma_j$ and  $J_j^{\Gamma_j}$ can be written in terms of $C_j^{\Gamma_j}$
\begin{align}
J_j^{\Gamma_j} = \frac{\Exp{S_j}}{2C_j^{\Gamma_j}} + C_j^{\Gamma_j} \frac{\Var{S_j}}{2\Exp{S_j}}+n_s-\frac{1}{2},\label{lower_boundsingle}\end{align}
where $\Exp{S_j}$ and $\Var{S_j}$ denote the expected value and the variance of $S_j$.  $J_j^{\Gamma_j}$ is a convex increasing function of $C_j^{\Gamma_j}$ and $C_j^* \leq \lambda_j$. Then, by inserting \eqref{lower_boundsingle} into \eqref{eq:decoupling}, similarly to Appendix~\ref{append_WI}, the lower bound for FR HARQ can be computed in closed form as follows:
\begin{subequations}
\begin{align}
J^{*} &\geq \min{\left(\sum_{j=1}^M {w_j\left(\frac{\Exp{S_j}}{2\lambda_j} + \lambda_j \frac{\Var{S_j}}{2\Exp{S_j}}+n_s-\frac{1}{2}\right)}\right)} \label{eq:0FR} \\
&=\frac{1}{2\lambda}{\left(\sum_{j=1}^M {\sqrt{w_j\Exp{S_j}}}\right)}^2+\frac{\lambda}{2}\min_j{ \left(\frac{\Var{S_j}}{\Exp{S_j}}\right)   }+  \sum_{j=1}^M {w_j\left(n_s-\frac{1}{2}\right)},
\end{align}
\end{subequations}
which is equal to the bound in Theorem~\ref{thm_FIR} where $\Exp{S_j}=\frac{n_s}{1-p^{FR}_j}$, and $\Var{S_j}=\Exp{S_j^2}-\Exp{S_j}^2=\frac{n_s^2 p^{FR}_j}{(1-p^{FR}_j)^2}$ for FR protocol.
\qed



\bibliography{thesis1}
\end{document}